\DeclarePairedDelimiterX\set[1]\lbrace\rbrace{#1}
\newcommand{\dd}{\mathrm{d}}
\newtheorem{theorem}{Theorem}[section]
\newtheorem{proposition}[theorem]{Proposition}
\newtheorem{lemma}[theorem]{Lemma}
\newtheorem{corollary}[theorem]{Corollary}
\theoremstyle{definition}
\newtheorem{ex}[theorem]{Example}
\newtheorem{definition}[theorem]{Definition}
\newtheorem*{conjecture}{Conjecture}
\begin{document}

\title{Gaussian Persistence Curves}

\author{Yu-Min Chung, Michael Hull, Austin Lawson, and Neil Pritchard}

\maketitle

\begin{abstract}

Topological data analysis (TDA) is a rising field in the intersection of mathematics, statistics, and computer science/data science. The cornerstone of TDA is persistent homology, which produces a summary of topological information called a persistence diagram. To utilize machine and deep learning methods on persistence diagrams, These diagrams are further summarized by transforming them into functions. In this paper we investigate the stability and injectivity of a class of smooth, one-dimensional functional summaries called Gaussian persistence curves.

\end{abstract}

\section{Introduction}
\label{sec:intro}

One of the main tools of topological data analysis (TDA) is persistent homology, which measures how certain topological features of a data set appear and disappear at different scales. This information can be stored and visualized in a concise format called a persistence diagram.

Functional summaries play an important role in topological data analysis, as they allow one to apply machine and deep learning techniques to analyze topological information contained in persistence diagrams. In \cite{chung2020smooth} a new class of one-dimensional smooth functional summaries was introduced called Gaussian persistence curves (GPC's). These functional summaries were built by combining (a slight variation of) the persistence curve framework from \cite{chunglawson2019} with the persistence surfaces construction from \cite{adams2017persistence}, and they were used to study the texture classification of grey-scale images \cite{chung2020smooth}.

In this paper, we investigate the stability of GPC's and the injectivity of both persistence surfaces and GPC's. Loosely speaking, stability refers to the property that small changes in diagrams correspond to small changes in the resulting summaries and the injectivity of a summary implies that the summary can distinguish between distinct diagrams. We show that unweighted GPC's are stable (Theorem \ref{t:unwstab}) and that, under mild conditions, weighted GPC's are both stable (Corollary \ref{c:wstab2} and Theorems \ref{t:wstab3} \& \ref{t:wstab4}) and almost injective (Theorem \ref{t:ainj}). Furthermore we show that unweighted persistence surfaces are injective (Theorem \ref{injective}).

Other summaries in topological data analysis include persistence landscapes \cite{bubenik2015statistical}, the persistent entropy summary function \cite{persistentEntropyStability}, persistence silhouettes \cite{chazal2013bootstrap}, persistence surfaces and persistence images \cite{adams2017persistence}. We refer to \cite{berry2018functional} for a review of the properties and applications of these summaries. All of these other summaries are known to be stable, but among them only the persistence landscapes are known to be injective. Note that persistence landscapes can be viewed as a sequences of one-dimensional functions and for any $n\geq 1$ injectivity will fail if only the first $n$ terms of the sequences are considered.

The outline of this paper is as follows. In Section \ref{sec:smooth pc} we introduce Gaussian persistence curves and derive some basic properties and useful formulas. In Section \ref{sec:stability} we apply these formulas to prove stability of unweighted GPC's and certain weighted GPC's. Finally, in Section \ref{sec:inj} we investigate the extent to which the functional summaries produced by persistence surfaces and GPC's are injective.

\section{Gaussian Persistence Curves}
\label{sec:smooth pc}

Here we describe the construction of Gaussian persistence curves which can be viewed as a combination of the persistence surfaces introduced in \cite{adams2017persistence} and the persistence curve framework from \cite{chunglawson2019}. We refer to \cite{edelsbrunner2008persistent} for background on persistent homology and persistence diagrams.

The input to our construction is a persistence diagram, by which we mean a finite multi-set $D$ of points in the plane which lie above the main diagonal $y=x$. Let $\mathbf{\Sigma}$ be a symmetric, positive semi-definite $2\times2$ matrix. For a point $\boldsymbol{\mu}\in\mathbb{R}^2$, Let $g_{\boldsymbol\mu,\mathbf{\Sigma}}$ be the probability density function (PDF) of a bivariate normal distribution with mean $\boldsymbol{\mu}$ and covariance matrix $\mathbf{\Sigma}$. That is, \[g_{\boldsymbol\mu,\Sigma}(\mathbf{x}) = \frac{\exp\left(-\frac{1}{2}(\mathbf{x}-\boldsymbol{\mu})^T\Sigma^{-1}(\mathbf{x}-\boldsymbol{\mu})\right)}{2\pi|\boldsymbol\Sigma|^{1/2}}.\]
Let $\kappa\colon \mathbb R^2\to\mathbb R$ be a function with $\kappa(b,b) = 0$ for all $b\in \mathbb R$. We refer to any such $\kappa$ as a weighting function.
\begin{definition}\cite{adams2017persistence}
The persistence surface associated to the diagram $D$ with weight $\kappa$ is the function
\[
\rho_{D,\kappa}(x, y)=\sum_{(b,d)\in D}\kappa(b,d)g_{(b,d),\boldsymbol{\Sigma}}(x, y).
\]
\end{definition}

In \cite{adams2017persistence} the authors choose a grid on the plane, integrate the persistence surface over each box in the grid, and then use these values to produce a vector summary of the original diagram. Instead, we look at this surface from the perspective of the persistence curve framework from \cite{chunglawson2019}. This framework produces a function $G\colon \mathbb R\to\mathbb R$ such that the value of $G(t)$ depends on measuring some property of the diagram inside the \emph{fundamental box} $F_t=\{(x, y)\;|\; x<t, y>t\}$.

\begin{definition}
\label{def:GaussianPC}
Let $\rho_{D, \kappa}$ be a persistence surface. The corresponding \textbf{Gaussian persistence curve} is the function
\[
G_{D, \kappa}(t)=\int_{F_t} \rho_{D, \kappa}(x, y)dxdy.  
\]
\end{definition}
If $\kappa(x, y)=1$ for all $(x, y)$, then we drop $\kappa$ from the notation and denote the corresponding surface and curve by $\rho_D(t)$ and $G_D(t)$ respectively. We refer to this curve as the unweighted Gaussian persistence curve on $D$.

We will always consider $\Sigma$ to be fixed ahead of time and do not include it in the notation for the curve $G_{D, \kappa}(t)$. While the definition makes sense for more general $\Sigma$, in this paper we fix $\Sigma$ to be a multiple of the identity matrix by a scalar $\sigma^2$. This allows us to split $g_{\boldsymbol\mu,\mathbf{\Sigma}}$  as
\[
g_{\boldsymbol\mu,\mathbf{\Sigma}}= \phi(\frac{y-d}{\sigma})\phi(\frac{x-b}{\sigma})
\]
where $\phi$ is the pdf of the standard normal distribution. This assumption allows one to easily preform the integration over the fundamental box and obtain the \textbf{CDF realization} of $G_{D, \kappa}(t)$ as
\[
G_{D, \kappa}(t) = \sum_{(b,d)\in D} \kappa_D(b,d)\Phi(\frac{t-b}{\sigma})\Phi(\frac{d-t}{\sigma}).
\]

Gaussian persistence curves fit into (a slight modification of) the persistence curve framework from \cite{chunglawson2019} in the following way. Let $\mathcal{D}$ be the set of all persistence diagrams,  $\Psi$ be the set of all functions $\psi:\mathcal{D}\times \mathbb{R}^3\to\mathbb{R}$ with $\psi(D;x,x,t) = 0$ for all $(x,x) \in \mathbb{R}^2$ and $D \in \mathcal{D}$. Let $\mathcal{R}$ represent the set of functions on $\mathbb{R}$. Let $\mathcal{T}$ be a set of operators $T(S,f)$ that read in a multi-set $S$ and real-valued function $f$ and returns a scalar. Given $D\in \mathcal{D}$, $\psi\in\Psi$, and $T\in\mathcal T$, the corresponding persistence curve is the function

\[P_{D,\psi, T}(t) := T(F_t,\psi(D;x,y,t)),~t\in\mathbb{R}.\] 

The function $P_{D,\psi, T}(t)$ is called a \textbf{persistence curve} on $D$ with respect to $\psi$ and $T$. In this notation, choosing $\psi(D;x,y,t)=\rho_{D, \kappa}(x, y)$ and $T(f, S)=\int_S f(x, y)dxdy$, we obtain$P_{D,\psi, T}(t)=G_{D, \kappa}(t)$.

We start with a few examples of Gaussian persistence curves, which are smooth versions of persistence curves appearing in \cite{chunglawson2019}. 
\begin{ex}\label{ex:smoothBetti}
When $\kappa_D(b,d) = 1$ for all $(b, d)$, the resulting unweighted Gaussian persistence curve can be viewed as a smooth version of the Betti curve from \cite{chunglawson2019}. For this reason we also refer to the unweighted Gaussian persistence curve $G_D(t)$ as the \emph{Gaussian Betti Curve}. 

\end{ex}

\begin{ex}\label{ex:smoothlifecurve}
Let $\ell_D(x,y) := (y-x)\cdot\chi_{D}(x,y)$ and let $\ell_{sum} = \sum_{(b,d)\in D}\ell(b,d)$. Define $\kappa_D(x,y) = \frac{\ell(x,y)}{\ell_{sum}}$. The corresponding Gaussian persistence curve is a smooth version of the life curve from \cite{chunglawson2019} which we call the \emph{Gaussian Life Curve}.
\end{ex}

\begin{ex}
Let $m_{sum}=\sum_{(b,d)\in D}(b+d)$ and define $\kappa_D(x,y) = \frac{x+y}{m_{sum}}$. The corresponding Gaussian persistence curve is a smooth version of the midlife curve from \cite{chunglawson2019} which we call the \emph{Gaussian midlife Curve}.
\end{ex}

With this set up, generating new Gaussian persistence curves is only a matter of selecting a covariance matrix $\Sigma$, which controls the smoothness of the curve and a function $\kappa$, which is a weighting function. For example, by using weight functions such as entropy function (-$\frac{d-b}{\sum_{(b,d)\in D}d-b}\log\frac{d-b}{\sum_{(b,d)\in D}d-b}$) and multiplicative life function ($\frac{d}{b})$, we can obtain Gaussian versions of the life entropy and multiplicative life persistence curves. In general, we can produce a Gaussian version of any function in the persistence curve framework. 

The next two lemmas will be used to compute the $L^1$--norm of a Gaussian persistence curve. Their proofs are elementary exercises in Calculus. 

\begin{lemma}\label{lem:base1}
Given $b>0 \in \mathbb{R}$,
\begin{equation}
\int_{-\infty}^{\infty} \Phi(\frac{b-t}{\sigma}) \Phi(\frac{t-b}{\sigma}) \dd{t} = \frac{\sigma}{\sqrt{\pi}}.
\end{equation}
\end{lemma}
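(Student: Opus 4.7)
The plan is to eliminate $b$ with a linear change of variable and then reduce the remaining integral to the basic Gaussian integral $\int e^{-u^2}\,du=\sqrt{\pi}$ by two integrations by parts. First, substitute $u=(t-b)/\sigma$, so $dt=\sigma\,du$ and
\[
\int_{-\infty}^{\infty}\Phi\bigl(\tfrac{b-t}{\sigma}\bigr)\Phi\bigl(\tfrac{t-b}{\sigma}\bigr)\,dt
=\sigma\int_{-\infty}^{\infty}\Phi(-u)\Phi(u)\,du.
\]
In particular the value is independent of $b$ (so the hypothesis $b>0$ is only cosmetic), and it suffices to show $I:=\int_{-\infty}^{\infty}\Phi(u)\Phi(-u)\,du=\tfrac{1}{\sqrt{\pi}}$.

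Next, set $f(u):=\Phi(u)\Phi(-u)$. Since $\phi$ is even, a direct computation gives $f'(u)=\phi(u)\bigl[\Phi(-u)-\Phi(u)\bigr]=\phi(u)\bigl[1-2\Phi(u)\bigr]$. I would then integrate by parts against $u$, namely $I=\int 1\cdot f(u)\,du=[uf(u)]_{-\infty}^{\infty}-\int u f'(u)\,du$. The boundary terms vanish because the Mill's ratio bound $\Phi(-u)\le \phi(u)/u$ (for $u>0$) forces $uf(u)\to 0$ as $u\to\pm\infty$. Thus
\[
I=\int_{-\infty}^{\infty} u\phi(u)\bigl[2\Phi(u)-1\bigr]\,du.
\]

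The key identity $u\phi(u)=-\phi'(u)$ now lets me integrate by parts a second time. Since $\phi(u)[2\Phi(u)-1]\to 0$ at $\pm\infty$, the boundary terms again vanish, and the derivative of $2\Phi(u)-1$ is $2\phi(u)$. This yields
\[
I=2\int_{-\infty}^{\infty}\phi(u)^2\,du=\frac{1}{\pi}\int_{-\infty}^{\infty}e^{-u^2}\,du=\frac{1}{\pi}\cdot\sqrt{\pi}=\frac{1}{\sqrt{\pi}}.
\]
Multiplying by the earlier factor of $\sigma$ gives the claim.

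There is no real obstacle here; the only step that demands genuine care is the two boundary evaluations, which both rely on standard Gaussian tail estimates. Everything else is the identity $u\phi(u)=-\phi'(u)$ and the value of the Gaussian integral.
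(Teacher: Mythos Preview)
Your proof is correct. The overall strategy matches the paper's: both reduce to the case $b=0$, $\sigma=1$ by the substitution $u=(t-b)/\sigma$, and both attack $\int_{-\infty}^{\infty}\Phi(u)\Phi(-u)\,du$ by integration by parts, ultimately landing on the Gaussian integral via $\int\phi^2=\tfrac{1}{2\sqrt{\pi}}$.

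The one genuine difference is how the integral $\int t\,\phi(t)\,\Phi(t)\,dt$ (equivalently, your $\int u\phi(u)\bigl[2\Phi(u)-1\bigr]\,du$) is handled. The paper quotes the handbook identity $\int_{-\infty}^{\infty} t\,\Phi(a+bt)\,\phi(t)\,dt=\tfrac{b}{\sqrt{1+b^2}}\,\phi\bigl(\tfrac{a}{\sqrt{1+b^2}}\bigr)$ at $a=0$, $b=1$ to get $\tfrac{1}{2\sqrt{\pi}}$ directly. You instead use the identity $u\phi(u)=-\phi'(u)$ and a second integration by parts to convert this term into $2\int\phi^2$, which is more elementary and makes the argument entirely self-contained. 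In effect, your second integration by parts \emph{proves} the special case of the handbook formula that the paper merely cites. The boundary checks you flag (Mill's ratio for $u f(u)\to 0$, and $\phi(u)(2\Phi(u)-1)\to 0$) are straightforward and correctly identified.
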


\begin{proof}
We will first prove the result for $b=0$ and $\sigma =1$.
By integration by parts (by letting $u=\Phi(-t)$ and $\dd{v} = \Phi(t) \dd{t}$), we obtain
\begin{align*}
\int_{-\infty}^{\infty} \Phi(-t) \Phi(t) \dd{t} = \underbrace{[t\Phi(-t) \Phi(t) + \Phi(-t) \phi(t)]_{-\infty}^{\infty}}_{I} + \underbrace{ \int_{-\infty}^{\infty} t \phi(t) \Phi(t) + \phi^2(t) \dd{t}}_{II}.    
\end{align*}

For $I$, by the elementary facts $\Phi(t) \rightarrow 1$ as $t \rightarrow \infty$ and L'Hopital's rule, one can evaluate that $I = 0$.

For the $II$, we consider each integral separately. By \cite{patel1996handbook},

we have that $\int_{\infty}^{\infty} t \Phi(a + bt) \phi(t) \dd{t} = \frac{b}{\sqrt{1+b^2}} \phi(\frac{a}{\sqrt{1+b^2}})$.  Since in our case $a=0$ and $b=1$, 
\begin{equation*}
    \int_{-\infty}^{\infty} t \Phi(t) \phi(t) \dd{t} = \frac{1}{\sqrt{2}}\phi(0) = \frac{1}{2\sqrt{\pi}}.
\end{equation*}
By \cite{patel1996handbook} again, we know that
\begin{equation*}
    \int_{-\infty}^{\infty} \phi^2(t) \dd{t} = \frac{1}{2\sqrt{\pi}}\Phi(\sqrt{2}t) \bigg|^{\infty}_{-\infty} = \frac{1}{2\sqrt{\pi}}.
\end{equation*}
Thus, sum over them to obtain the desired result. To obtain the final result simply apply the substitution, $s = \frac{t-b}{\sigma}$ then $ds= \frac{1}{\sigma}dt$.
\begin{align*}
    \int_{-\infty}^{\infty} \Phi(\frac{b-t}{\sigma}) \Phi(\frac{t-b}{\sigma}) \dd{t} &= \int_{-\infty}^{\infty} \Phi(-s) \Phi(s) \sigma \dd{s} = \frac{\sigma}{\sqrt{\pi}}.
\end{align*}
    
\end{proof}

\begin{lemma}\label{lem:base2}
\begin{align}
    &\int_{-\infty}^{\infty} \Phi(at + b_1) (\Phi(at+b_2)-\Phi(at+b_3) ) \dd{t}\\
    &= \frac{-\sqrt{2}}{a} \left[ \frac{b_1-b_2}{\sqrt{2}}\Phi(\frac{b_1-b_2}{\sqrt{2}}) + \phi(\frac{b_1-b_2}{\sqrt{2}}) - \frac{b_1-b_3}{\sqrt{2}}\Phi(\frac{b_1-b_3}{\sqrt{2}}) - \phi(\frac{b_1-b_3}{\sqrt{2}}) \right]
\end{align}
\end{lemma}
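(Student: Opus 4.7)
The plan is to reduce the two-parameter integral to a single one-dimensional antiderivative problem. The main idea is to write the difference $\Phi(at+b_2)-\Phi(at+b_3)$ as an integral of $\phi$, swap the order of integration using Fubini, and then invoke the same convolution formula from \cite{patel1996handbook} already used in Lemma \ref{lem:base1}.

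First, I would substitute $s = at$ (the case $a>0$; $a<0$ follows by the same computation with a sign bookkeeping), reducing the integral to
\[
\frac{1}{a}\int_{-\infty}^{\infty}\Phi(s+b_1)\bigl[\Phi(s+b_2)-\Phi(s+b_3)\bigr]\,\dd s.
\]
Then I would rewrite the bracketed difference as $\int_{b_3}^{b_2}\phi(s+u)\,\dd u$. Because $\phi$ has Gaussian decay and $\Phi$ is bounded, Fubini's theorem applies and the expression becomes
\[
\frac{1}{a}\int_{b_3}^{b_2}\left(\int_{-\infty}^{\infty}\Phi(s+b_1)\,\phi(s+u)\,\dd s\right)\dd u.
\]

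For the inner integral, a shift $s\mapsto s-u$ gives $\int_{-\infty}^{\infty}\Phi(s+(b_1-u))\phi(s)\,\dd s$, which equals $\Phi\bigl((b_1-u)/\sqrt{2}\bigr)$ by the Patel--Read identity $\int \Phi(\alpha+\beta s)\phi(s)\,\dd s = \Phi(\alpha/\sqrt{1+\beta^2})$ with $\alpha=b_1-u$ and $\beta=1$. Thus the problem collapses to
\[
\frac{1}{a}\int_{b_3}^{b_2}\Phi\!\left(\tfrac{b_1-u}{\sqrt{2}}\right)\dd u.
\]

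Finally, I would apply the substitution $v=(b_1-u)/\sqrt{2}$, so $\dd u = -\sqrt{2}\,\dd v$, and use the elementary antiderivative $\int\Phi(v)\,\dd v = v\Phi(v)+\phi(v)+C$ (which follows from the product rule together with $\phi'(v)=-v\phi(v)$). Evaluating at the resulting endpoints $v=(b_1-b_2)/\sqrt{2}$ and $v=(b_1-b_3)/\sqrt{2}$ and tracking the minus sign from $\dd u=-\sqrt{2}\,\dd v$ yields exactly the right-hand side of the claim. No step is really hard here: the whole argument is mechanical once Fubini is justified, and that justification is immediate from the Gaussian decay of $\phi$; the only place that requires attention is matching signs in the final endpoint evaluation so that the output agrees with the stated form.
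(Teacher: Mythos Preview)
Your proposal is correct and follows essentially the same route as the paper: write the CDF difference as an integral of $\phi$, apply Fubini, use the identity $\int_{-\infty}^{\infty}\Phi(x+\epsilon)\phi(x)\,\dd x=\Phi(\epsilon/\sqrt{2})$, and finish with the antiderivative $\int\Phi(v)\,\dd v=v\Phi(v)+\phi(v)$. The only cosmetic difference is that you perform the substitution $s=at$ at the outset whereas the paper carries the $a$ through and rescales during the inner-integral step; the resulting endpoint evaluation is identical.
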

\begin{proof}
Let $s = u + (at + b_3)$.  Then $\dd{s} = \dd{u}$.
\begin{align*}
    &\int_{-\infty}^{\infty} \Phi(at + b_1) (\Phi(at+b_2)-\Phi(at+b_3) ) \dd{t}\\
    &= \int_{-\infty}^{\infty} \Phi(at + b_1) \int_{{at+b_3}}^{at+b_2} \phi(s)~ \dd{s} \dd{t} \\
    &= \int_{-\infty}^{\infty}  \int_{{at+b_3}}^{at+b_2} \Phi(at + b_1) \phi(s) ~ \dd{s} \dd{t} \\
    &=\int_{-\infty}^{\infty}  \int_{0}^{b_2-b_3} \Phi(at + b_1) \phi(at+b_3 + u) ~ \dd{u} \dd{t} \\
    &= \int_{0}^{b_2-b_3}\int_{-\infty}^{\infty}  \Phi(at + b_1) \phi(at+b_3 + u) ~ \dd{t} \dd{u}.
\end{align*}
To evaluate the integral, we recall that $\int_{-\infty}^{\infty} \Phi(x+\epsilon) \phi(x)~ \dd{x} = \Phi(\frac{\epsilon}{\sqrt{2}})$.  Consider another substitution: $x = at + b_3 + u$, so $\dd{x} = a \dd{t}$.
\begin{align*}
    &\int_{-\infty}^{\infty}  \Phi(at + b_1) \phi(at+b_3 + u) ~ \dd{t} \\
    &=\int_{-\infty}^{\infty} \Phi(x-u-b_3+b_1) \phi(x) \frac{1}{a} ~ \dd{x} = \frac{1}{a} \Phi(\frac{b_1-b_3-u}{\sqrt{2}}).
\end{align*}
Let $v=\frac{b_1-b_3-u}{\sqrt{2}}$. Then $\dd{v} = \frac{-1}{\sqrt{2}} \dd{u}$. Finally, we obtain
\begin{align*}
    &\int_{0}^{b_2-b_3}\int_{-\infty}^{\infty}  \Phi(at + b_1) \phi(at+b_3 + u) ~ \dd{t} \dd{u}\\
    &=\int_{0}^{b_2-b_3}\frac{1}{a} \Phi(\frac{b_1-b_3-u}{\sqrt{2}}) ~ \dd{u}.\\
    &= \int_{\frac{b_1-b_3}{\sqrt{2}}}^{\frac{b_1-b_2}{\sqrt{2}}} \frac{-1}{a} \sqrt{2} \Phi(v) \dd{v} = \frac{-\sqrt{2}}{a} \left[ v\Phi(v) + \phi(v) \right]_{\frac{b_1-b_3}{\sqrt{2}}}^{\frac{b_1-b_2}{\sqrt{2}}}\\
    &=\frac{-\sqrt{2}}{a} \left[ \frac{b_1-b_2}{\sqrt{2}}\Phi(\frac{b_1-b_2}{\sqrt{2}}) + \phi(\frac{b_1-b_2}{\sqrt{2}}) - \frac{b_1-b_3}{\sqrt{2}}\Phi(\frac{b_1-b_3}{\sqrt{2}}) - \phi(\frac{b_1-b_3}{\sqrt{2}}) \right].
\end{align*}
\end{proof}

\begin{proposition}\label{1-normGPC}
Let $G_D(t)$ be an unweighted Gaussian persistence curve on a diagram $D$. Then
\begin{equation}
    \| G_D(t) \|_1 = \sum_{(b,d)\in D} \left[ (d-b) \Phi(\frac{d-b}{\sqrt{2}\sigma}) + \sqrt{2}\sigma\phi(\frac{d-b}{\sqrt{2}\sigma}) \right].
\end{equation}
\end{proposition}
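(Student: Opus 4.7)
The plan is to compute the $L^1$-norm by directly integrating the CDF realization of $G_D(t)$ and using the two preceding lemmas. Since each summand $\Phi(\frac{t-b}{\sigma})\Phi(\frac{d-t}{\sigma})$ is nonnegative on all of $\mathbb{R}$, the curve $G_D(t)$ is itself nonnegative, so $\|G_D\|_1 = \int_{-\infty}^{\infty} G_D(t)\,\mathrm{d}t$. By Tonelli I may interchange the finite sum with the integral, reducing the problem to evaluating, for each fixed $(b,d)\in D$, the single integral
\[
I(b,d) \;=\; \int_{-\infty}^{\infty} \Phi\!\left(\tfrac{t-b}{\sigma}\right)\Phi\!\left(\tfrac{d-t}{\sigma}\right)\mathrm{d}t.
\]

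To make the two lemmas applicable, I would split $\Phi(\tfrac{d-t}{\sigma})$ as $\Phi(\tfrac{b-t}{\sigma}) + \bigl[\Phi(\tfrac{d-t}{\sigma}) - \Phi(\tfrac{b-t}{\sigma})\bigr]$, so that $I(b,d) = A + B$ where
\[
A = \int_{-\infty}^{\infty}\Phi\!\left(\tfrac{t-b}{\sigma}\right)\Phi\!\left(\tfrac{b-t}{\sigma}\right)\mathrm{d}t, \qquad
B = \int_{-\infty}^{\infty}\Phi\!\left(\tfrac{t-b}{\sigma}\right)\!\left[\Phi\!\left(\tfrac{d-t}{\sigma}\right) - \Phi\!\left(\tfrac{b-t}{\sigma}\right)\right]\mathrm{d}t.
\]
Lemma \ref{lem:base1} immediately yields $A = \sigma/\sqrt{\pi}$. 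For $B$, I would use the identity $\Phi(-u) = 1-\Phi(u)$ to rewrite the bracketed factor as $\Phi(\tfrac{t-b}{\sigma}) - \Phi(\tfrac{t-d}{\sigma})$, which puts $B$ into exactly the form required by Lemma \ref{lem:base2} with the parameters $a = 1/\sigma$, $b_1 = -b/\sigma$, $b_2 = -b/\sigma$, and $b_3 = -d/\sigma$.

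Applying Lemma \ref{lem:base2} with those parameters, the first two terms in its bracket become $0\cdot\Phi(0) + \phi(0) = \frac{1}{\sqrt{2\pi}}$, and the last two produce $\frac{d-b}{\sqrt{2}\sigma}\Phi(\frac{d-b}{\sqrt{2}\sigma}) + \phi(\frac{d-b}{\sqrt{2}\sigma})$, with the overall prefactor $-\sqrt{2}\sigma$. Multiplying out gives
\[
B \;=\; -\tfrac{\sigma}{\sqrt{\pi}} + (d-b)\Phi\!\left(\tfrac{d-b}{\sqrt{2}\sigma}\right) + \sqrt{2}\sigma\,\phi\!\left(\tfrac{d-b}{\sqrt{2}\sigma}\right).
\]
Adding $A$ cancels the $\sigma/\sqrt{\pi}$ term exactly, leaving $I(b,d) = (d-b)\Phi(\tfrac{d-b}{\sqrt{2}\sigma}) + \sqrt{2}\sigma\,\phi(\tfrac{d-b}{\sqrt{2}\sigma})$. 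Summing over $(b,d)\in D$ finishes the proof.

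The only real obstacle is identifying the right splitting so that the two base lemmas fit — the natural temptation is to try $\Phi(\tfrac{t-b}{\sigma})\Phi(\tfrac{d-t}{\sigma})$ directly, but the mismatched signs on $t$ block a direct application of Lemma \ref{lem:base2}. Once one recognizes that adding and subtracting $\Phi(\tfrac{b-t}{\sigma})$ converts the integrand into a Lemma \ref{lem:base1} piece plus a Lemma \ref{lem:base2} piece, the rest is careful bookkeeping, and the pleasant cancellation of the $\sigma/\sqrt{\pi}$ contributions provides a sanity check.
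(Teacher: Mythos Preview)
Your proof is correct and follows essentially the same approach as the paper: both drop the absolute value by nonnegativity, exchange sum and integral, decompose each summand into the piece $\Phi(\tfrac{t-b}{\sigma})\Phi(\tfrac{b-t}{\sigma})$ handled by Lemma~\ref{lem:base1} and the piece $\Phi(\tfrac{t-b}{\sigma})\bigl[\Phi(\tfrac{t-b}{\sigma})-\Phi(\tfrac{t-d}{\sigma})\bigr]$ handled by Lemma~\ref{lem:base2}, and observe the cancellation of the $\sigma/\sqrt{\pi}$ terms. The only cosmetic difference is that you phrase the decomposition as splitting $\Phi(\tfrac{d-t}{\sigma})$, whereas the paper phrases it as adding and subtracting $\Phi(\tfrac{t-b}{\sigma})+\Phi^2(\tfrac{t-b}{\sigma})$; the resulting integrands are identical.
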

\begin{proof}
\begin{align}
    &\| G_D \|_1 = \int_{-\infty}^{\infty} \left | \sum_{(b,d)\in D} \Phi(\frac{t-b}{\sigma}) \Phi(\frac{d-t}{\sigma}) \right | \dd{t}\\
    &= \int_{-\infty}^{\infty}  \sum_{(b,d)\in D} \Phi(\frac{t-b}{\sigma}) \Phi(\frac{d-t}{\sigma}) \dd{t}\\
    &= \sum_{(b,d)\in D} \int_{-\infty}^{\infty} \Phi(\frac{t-b}{\sigma}) \Phi(\frac{d-t}{\sigma}) \dd{t}.
    \end{align}
    By adding and subtracting $\Phi(\frac{t-b}{\sigma})+\Phi^2(\frac{t-b}{\sigma})$ and using the CDF property $\Phi(-t) = 1-\Phi(t)$ we obtain,
    \begin{align}
    &= \sum_{(b,d)\in D} \int_{-\infty}^{\infty} \Phi(\frac{t-b}{\sigma}) \Phi(\frac{b-t}{\sigma}) + \Phi(\frac{t-b}{\sigma}) \left(\Phi(\frac{t-b}{\sigma})-\Phi(\frac{t-d}{\sigma}) \right) \dd{t} \\
    &= \sum_{(b,d)\in D} \frac{\sigma}{\sqrt{\pi}} +  (d-b) \Phi(\frac{d-b}{\sqrt{2}\sigma}) + \sqrt{2}\sigma\phi(\frac{d-b}{\sqrt{2}\sigma}) - \frac{\sigma}{\sqrt{\pi}}\\
    &= \sum_{(b,d)\in D} \left[ (d-b) \Phi(\frac{d-b}{\sqrt{2}\sigma}) + \sqrt{2}\sigma\phi(\frac{d-b}{\sqrt{2}\sigma}) \right].
\end{align}
where (9) follows from Lemmas~\ref{lem:base1} and \ref{lem:base2}.
\end{proof}

Let $L_D=\sum_{\{(b, d)\in D\}} d-b$. We refer to $L_D$ as the {\bf total lifespan} of $D$. We also define $\delta_D=\min_{(b, d)\in D} d-b$, that is $\delta_D$ is the {\bf minimum lifespan} of a point in $D$. We note that by convention, $\min\emptyset = \infty$ and $\frac{1}{\min\emptyset} = 0$.

\begin{corollary}\label{unweighted1bound}
For any persistence diagram $D$,
\[
\| G_D(t) \|_1 \leq \sum_{(b,d)\in D} \left[ (d-b) + \frac{\sigma}{\sqrt{\pi}} \right]\leq (1+\frac{\sigma}{\sqrt{\pi}\delta_D})L_D.  
\]
\end{corollary}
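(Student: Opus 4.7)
The plan is to start from the exact formula given by Proposition~\ref{1-normGPC} and apply two elementary pointwise bounds on the CDF and PDF of the standard normal, followed by a simple counting inequality relating $|D|$, $\delta_D$, and $L_D$.

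First, for the left inequality, I would bound each summand in
\[
\| G_D(t) \|_1 = \sum_{(b,d)\in D} \left[ (d-b) \Phi\!\left(\tfrac{d-b}{\sqrt{2}\sigma}\right) + \sqrt{2}\sigma\,\phi\!\left(\tfrac{d-b}{\sqrt{2}\sigma}\right) \right]
\]
termwise. Since $0 \le \Phi(x)\le 1$ for all $x$, the first piece is bounded above by $d-b$. Since $\phi(x) = \frac{1}{\sqrt{2\pi}}e^{-x^2/2} \le \frac{1}{\sqrt{2\pi}}$, the second piece is bounded above by $\sqrt{2}\sigma \cdot \frac{1}{\sqrt{2\pi}} = \frac{\sigma}{\sqrt{\pi}}$. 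Summing over the multiset $D$ produces the middle expression $\sum_{(b,d)\in D}\bigl[(d-b) + \frac{\sigma}{\sqrt{\pi}}\bigr]$.

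For the right inequality, I would rewrite the middle quantity as $L_D + \frac{\sigma|D|}{\sqrt{\pi}}$, where $|D|$ denotes the cardinality of the multiset. The desired upper bound $\bigl(1+\frac{\sigma}{\sqrt{\pi}\,\delta_D}\bigr)L_D = L_D + \frac{\sigma L_D}{\sqrt{\pi}\,\delta_D}$ then follows from the inequality $|D|\,\delta_D \le L_D$, which is immediate because $\delta_D = \min_{(b,d)\in D}(d-b)$, so each of the $|D|$ summands in $L_D = \sum_{(b,d)\in D}(d-b)$ is at least $\delta_D$. The edge case $D = \emptyset$ is handled by the stated convention $\frac{1}{\min\emptyset}=0$, under which both sides vanish.

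There is no real obstacle here; the work is entirely in identifying the two correct pointwise bounds ($\Phi\le 1$, $\phi\le 1/\sqrt{2\pi}$) and the trivial fact $|D|\delta_D \le L_D$. The only mild subtlety worth noting in the write-up is the convention about the empty diagram, to make sure the second bound is well-defined even when $\delta_D = \infty$.
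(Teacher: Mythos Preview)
Your proposal is correct and matches what the paper intends: the corollary is stated immediately after Proposition~\ref{1-normGPC} with no separate proof, so the implicit argument is exactly the termwise bounds $\Phi\le 1$, $\phi\le 1/\sqrt{2\pi}$, together with $|D|\delta_D\le L_D$.
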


Apply the same argument as above gives a bound for the weighted case as well. Let $M_{D, \kappa}=\max_{(b, d)\in D}|\kappa(b, d)|$.

\begin{corollary}\label{weighted1bound}
\begin{align*}
\| G_{D, \kappa}(t) \|_1 &= \sum_{(b,d)\in D} |\kappa(b, d)|\left[ (d-b) \Phi(\frac{d-b}{\sqrt{2}\sigma}) + 
\sqrt{2}\sigma\phi(\frac{d-b}{\sqrt{2}\sigma}) \right].\\
&\leq \sum_{(b,d)\in D} |\kappa(b, d)|\left[ (d-b) + \frac{\sigma}{\sqrt{\pi}} \right]\leq (1+\frac{\sigma}{\sqrt{\pi}\delta_D})M_{D,\kappa}L_D
\end{align*}
\end{corollary}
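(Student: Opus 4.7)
The plan is to mirror the proof of Proposition~\ref{1-normGPC} and then chain two elementary bounds, exactly as in Corollary~\ref{unweighted1bound}. First, start from the CDF realization of $G_{D,\kappa}$ and apply the triangle inequality pointwise in $t$:
\[
|G_{D,\kappa}(t)| \;\leq\; \sum_{(b,d)\in D} |\kappa(b,d)|\,\Phi\!\left(\tfrac{t-b}{\sigma}\right)\Phi\!\left(\tfrac{d-t}{\sigma}\right).
\]
(When $\kappa\geq 0$ this is equality, which is presumably the situation the first line of the statement addresses.) Integrating over $\mathbb{R}$, swapping the finite sum with the integral, and applying the same identity from the proof of Proposition~\ref{1-normGPC} term-by-term, each summand contributes $(d-b)\Phi(\tfrac{d-b}{\sqrt{2}\sigma}) + \sqrt{2}\sigma\,\phi(\tfrac{d-b}{\sqrt{2}\sigma})$, which yields the first line. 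This step is immediate because $|\kappa(b,d)|$ is a constant scalar pulled outside the $t$-integral, so nothing about the computation of the integral itself changes.

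Next I would establish the first inequality in the second line by plugging in the trivial pointwise bounds $\Phi\leq 1$ and $\phi\leq \tfrac{1}{\sqrt{2\pi}}$. The second bound gives $\sqrt{2}\sigma\,\phi(\tfrac{d-b}{\sqrt{2}\sigma}) \leq \sqrt{2}\sigma\cdot\tfrac{1}{\sqrt{2\pi}} = \tfrac{\sigma}{\sqrt{\pi}}$, so each bracket is dominated by $(d-b)+\tfrac{\sigma}{\sqrt{\pi}}$.

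For the final inequality, factor $M_{D,\kappa}:=\max_{(b,d)\in D}|\kappa(b,d)|$ out of the sum to obtain
\[
M_{D,\kappa}\!\left(L_D + |D|\cdot\tfrac{\sigma}{\sqrt{\pi}}\right),
\]
and then use the key observation that $|D|\,\delta_D \leq L_D$, since every point of $D$ contributes at least $\delta_D$ to the total lifespan $L_D$. This gives $|D|\leq L_D/\delta_D$, so
\[
M_{D,\kappa}\!\left(L_D + \tfrac{L_D}{\delta_D}\cdot\tfrac{\sigma}{\sqrt{\pi}}\right) = \left(1+\tfrac{\sigma}{\sqrt{\pi}\,\delta_D}\right) M_{D,\kappa} L_D,
\]
matching the claimed right-hand side. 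The boundary case $D=\emptyset$ is handled by the stated convention $1/\min\emptyset = 0$, which makes both sides vanish.

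There is no real obstacle here: every nontrivial integral has already been evaluated in Lemmas~\ref{lem:base1}--\ref{lem:base2} and Proposition~\ref{1-normGPC}, and the only new ingredients are the triangle inequality, two uniform bounds on $\Phi$ and $\phi$, and the $|D|\leq L_D/\delta_D$ counting estimate. The mildly delicate point worth flagging in the write-up is the "$=$'' in the first displayed line: strictly, once $\kappa$ is allowed to be signed, absolute values may not commute with the integral-of-the-sum and one only has "$\leq$''. I would either state the corollary as an inequality throughout, or explicitly restrict to the case $\kappa \geq 0$ for the equality.
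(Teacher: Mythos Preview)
Your proposal is correct and follows exactly the route the paper intends: the paper's only ``proof'' is the sentence ``Apply the same argument as above gives a bound for the weighted case as well,'' and your write-up is precisely that argument spelled out---pull $|\kappa(b,d)|$ outside the $t$-integral, reuse the computation from Proposition~\ref{1-normGPC} term-by-term, then chain the $\Phi\le 1$, $\phi\le 1/\sqrt{2\pi}$ bounds and the $|D|\le L_D/\delta_D$ estimate from Corollary~\ref{unweighted1bound}. Your caveat about the ``$=$'' in the first line is well taken and is in fact a minor imprecision in the paper's statement: for signed $\kappa$ one only gets ``$\le$'' via the triangle inequality, so flagging the restriction to $\kappa\ge 0$ (or writing the corollary as an inequality throughout) is the right call.
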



As long as the diagram $D$ is finite, then the Gaussian persistence curve given by Definition~\ref{def:GaussianPC} will be a Lipschitz function with respect to the input $t\in\mathbb{R}$ (see \cite{chung2020smooth}). Together with some mild assumptions on the weight functions $\kappa$, this implies that whenever there is is a process for randomly sampling persistence diagrams the associated Gaussian persistence curves will satisfy a version of the central limit theorem. See \cite{chung2020smooth} for details, or \cite{berry2018functional} for more general results about statistical properties of functional summaries.

\section{Stability}\label{sec:stability}
Given persistence diagrams $C$ and $D$, a \textbf{matching} between $C$ and $D$ is a bijection $\gamma: C\cup \Delta\to D\cup \Delta$ where $\Delta$ is the main diagonal in $\mathbb{R}^2$ with each point assigned infinite multiplicity. For a fixed matching $\gamma$ and $(b,d)\in C\cup\Delta$, we denote $\gamma(b,d)$ as $(\gamma_b,\gamma_d)$. We can compute the cost of a matching $\gamma$ as
\[L(\gamma) = \sum \|(b,d)-(\gamma_{b},\gamma_{d})\|_\infty\]
where the sum is over all points $(b, d)\in C\cup\Delta$ such that either $(b, d)\in C$ or $(\gamma_b, \gamma_d)\in D$. We define the \textbf{1-Wasserstein distance} $W_1(C,D)$ between diagrams $C$ and $D$ as the infimum of this cost function over all $\gamma$. That is \[W_1(C,D) =\inf_\gamma L(\gamma).\]

Our next goal is to show that if diagrams $C$ and $D$ are close with respect to the 1-Wasserstein distance, then the corresponding unweighted Gaussian persistence curves $G_C(t)$ and $G_D(t)$ are close with respect to the $L^1$ norm. In the general theory of summaries of persistence diagrams, this phenomenon is called {\bf stability}. 

 \begin{lemma}\label{lem:diff_gaussianCDF}
Let $d,d',\sigma\in \mathbb{R}$. Then \[\int_{-\infty}^\infty \left|\Phi\left(\frac{t-d'}{\sigma}\right)-\Phi\left(\frac{t-d}{\sigma}\right)\right|\dd{t} = |d-d'|.\]

\end{lemma}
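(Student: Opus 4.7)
The plan is to reduce to a one-parameter situation by symmetry and a change of variables, then to evaluate the remaining integral using Fubini on a double integral against the standard Gaussian pdf.

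First, I would observe that the integrand is symmetric in $d$ and $d'$, so without loss of generality we may assume $d' \geq d$. Since $\Phi$ is monotone increasing and $\tfrac{t-d}{\sigma} \geq \tfrac{t-d'}{\sigma}$ for every $t$, the absolute value simply drops, leaving
\[
\int_{-\infty}^{\infty} \left[\Phi\!\left(\tfrac{t-d}{\sigma}\right) - \Phi\!\left(\tfrac{t-d'}{\sigma}\right)\right]\dd{t}.
\]
Next I would substitute $u = \tfrac{t-d}{\sigma}$, which yields $\sigma \int_{-\infty}^{\infty}\left[\Phi(u) - \Phi(u - c)\right]\dd{u}$ where $c = \tfrac{d'-d}{\sigma} \geq 0$. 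At this point the problem is reduced to showing that the remaining integral equals $c$.

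For that step, I would write $\Phi(u) - \Phi(u-c) = \int_{u-c}^{u} \phi(s)\,\dd{s}$ and swap the order of integration. For each fixed $s \in \mathbb{R}$, the set of $u$ satisfying $u-c \leq s \leq u$ is the interval $[s, s+c]$, which has length $c$, so Fubini (applicable since the integrand is nonnegative) gives
\[
\int_{-\infty}^{\infty}\int_{u-c}^{u}\phi(s)\,\dd{s}\,\dd{u} = \int_{-\infty}^{\infty}\phi(s)\cdot c\,\dd{s} = c,
\]
because $\phi$ integrates to $1$. Multiplying by $\sigma$ recovers $\sigma c = d' - d = |d - d'|$, as desired.

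The only mild obstacle is that the integrand $\Phi(u) - \Phi(u-c)$ is not obviously absolutely integrable on $\mathbb{R}$ — it decays to $0$ at both ends but not fast a priori — so one should justify convergence. The cleanest justification is exactly the Fubini computation above: since every function in sight is nonnegative, Tonelli's theorem applies unconditionally, and the finite answer $c$ simultaneously confirms integrability and evaluates the integral. Thus no delicate tail estimate on $1-\Phi$ is actually needed.
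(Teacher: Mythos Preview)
Your proof is correct and follows essentially the same approach as the paper: both arguments write the CDF difference as an integral of $\phi$ over a strip and then swap the order of integration so that the inner integral becomes a constant. Your version is marginally cleaner in that you first reduce to a single parameter $c$ via the substitution $u=(t-d)/\sigma$ and you explicitly invoke Tonelli to justify the interchange, whereas the paper carries out the substitution inside the double integral and leaves the Fubini step implicit.
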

\begin{proof}
Consider that \[\int_{-\infty}^\infty\left| \Phi\left(\frac{t-d'}{\sigma}\right)-\Phi\left(\frac{t-d}{\sigma}\right)\right|\dd{t} =\int_{-\infty}^\infty\left|\int_{\frac{t-d}{\sigma}}^{\frac{t-d'}{\sigma}} \phi(z)\dd{z}\right|\dd{t}\] Substitute $z = u + \frac{t-d}{\sigma}$. Then $\dd{z} = \dd{u}$ and
\begin{align}
    \int_{-\infty}^\infty\left|\int_{\frac{t-d}{\sigma}}^{\frac{t-d'}{\sigma}} \phi(z)\dd{z}\right|\dd{t} &= \int_{-\infty}^\infty \left|\int_{0}^{\frac{d-d'}{\sigma}} \phi\left(u + \frac{t-d}{\sigma}\right) \dd{u}\right| \dd{t}
    \\ &= \int_{-\infty}^\infty \int_{0}^{\frac{|d-d'|}{\sigma}}\phi\left(u + \frac{t-d}{\sigma}\right)~ \dd{t} \dd{u} \\ &=\int_{0}^{\frac{|d-d'|}{\sigma}} \int_{-\infty}^\infty \phi\left(u + \frac{t-d}{\sigma}\right)~ \dd{t} \dd{u}\\&= \int_{0}^{\frac{|d-d'|}{\sigma}} \left[ \sigma \Phi\left(u + \frac{t-d}{\sigma}\right)\right]_{-\infty}^{\infty} \dd{u} \\
    & = \int_{0}^{\frac{|d-d'|}{\sigma}} \sigma \dd{u} = |d-d'|.
\end{align}

\end{proof}

\begin{lemma}
\label{lem:g1-g2 bdd by W}
\begin{equation}
\int_{-\infty}^{\infty} \bigg|  \Phi(\frac{t-b_1}{\sigma})\Phi(\frac{d_1-t}{\sigma}) -\Phi(\frac{t-b_2}{\sigma})\Phi(\frac{d_2-t}{\sigma})\bigg|\dd{t} \leq |b_1 - b_2| + |d_1 - d_2|.
\end{equation}
\end{lemma}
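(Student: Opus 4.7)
The plan is to reduce this to Lemma \ref{lem:diff_gaussianCDF} by an add-and-subtract trick inside the absolute value, exploiting the fact that $\Phi$ is bounded by $1$.

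First I would insert the cross term $\Phi(\frac{t-b_1}{\sigma})\Phi(\frac{d_2-t}{\sigma})$ and factor to rewrite the integrand as
\[
\Phi(\tfrac{t-b_1}{\sigma})\bigl[\Phi(\tfrac{d_1-t}{\sigma})-\Phi(\tfrac{d_2-t}{\sigma})\bigr] + \Phi(\tfrac{d_2-t}{\sigma})\bigl[\Phi(\tfrac{t-b_1}{\sigma})-\Phi(\tfrac{t-b_2}{\sigma})\bigr].
\]
Applying the triangle inequality and bounding the two $\Phi$-factors in front by $1$ (since $\Phi$ takes values in $[0,1]$) gives the pointwise estimate
\[
\bigl|\Phi(\tfrac{t-b_1}{\sigma})\Phi(\tfrac{d_1-t}{\sigma})-\Phi(\tfrac{t-b_2}{\sigma})\Phi(\tfrac{d_2-t}{\sigma})\bigr| \leq \bigl|\Phi(\tfrac{d_1-t}{\sigma})-\Phi(\tfrac{d_2-t}{\sigma})\bigr| + \bigl|\Phi(\tfrac{t-b_1}{\sigma})-\Phi(\tfrac{t-b_2}{\sigma})\bigr|.
\]

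Next I would integrate this inequality over $\mathbb{R}$. The second term on the right integrates to $|b_1-b_2|$ directly by Lemma \ref{lem:diff_gaussianCDF}. For the first term, I would use the identity $\Phi(-x)=1-\Phi(x)$ to rewrite $\Phi(\frac{d_i-t}{\sigma}) = 1-\Phi(\frac{t-d_i}{\sigma})$, so the absolute difference equals $|\Phi(\frac{t-d_1}{\sigma})-\Phi(\frac{t-d_2}{\sigma})|$, and again Lemma \ref{lem:diff_gaussianCDF} yields $|d_1-d_2|$. Adding the two contributions gives the claimed bound.

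There is no real obstacle here; the only subtle point is choosing the "right" cross term so that each resulting factor is a single $\Phi$ (hence bounded by $1$) multiplying a difference of $\Phi$'s that fits the hypothesis of Lemma \ref{lem:diff_gaussianCDF}. The symmetry of the argument (one could equally well insert $\Phi(\frac{t-b_2}{\sigma})\Phi(\frac{d_1-t}{\sigma})$) shows the choice is inessential.
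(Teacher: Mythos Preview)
Your proposal is correct and follows essentially the same approach as the paper: both proofs insert a cross term, factor, bound the outer $\Phi$-factors by $1$, and invoke Lemma~\ref{lem:diff_gaussianCDF}. In fact your decomposition and the paper's are identical once one uses $\Phi(\frac{d_2-t}{\sigma})=1-\Phi(\frac{t-d_2}{\sigma})$; your write-up is slightly cleaner in that you carry out the pointwise estimate before integrating, whereas the paper (somewhat imprecisely) places the absolute value outside the integral in its first displayed line.
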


\begin{proof}
\begin{align*}
    &\bigg|\int_{-\infty}^{\infty}\Phi(\frac{t-b_1}{\sigma})\Phi(\frac{d_1-t}{\sigma}) -\Phi(\frac{t-b_2}{\sigma})\Phi(\frac{d_2-t}{\sigma})\dd{t}\bigg|\\
    = &\bigg|\int_{-\infty}^{\infty} \Phi(\frac{t-b_1}{\sigma})\Phi(\frac{d_1-t}{\sigma}) -\Phi(\frac{t-b_2}{\sigma})\Phi(\frac{d_2-t}{\sigma}) +\Phi(\frac{t-b_1}{\sigma})\Phi(\frac{t-d_2}{\sigma}) -\Phi(\frac{t-b_1}{\sigma})\Phi(\frac{t-d_2}{\sigma})     \dd{t}\bigg|\\
    =&\bigg|\int_{-\infty}^{\infty}\Phi(\frac{t-b_1}{\sigma})[\Phi(\frac{t-d_2}{\sigma})-\Phi(\frac{t-d_1}{\sigma})]+[1-\Phi(\frac{t-d_2}{\sigma})][\Phi(\frac{t-b_1}{\sigma}-\Phi(\frac{t-b_2}{\sigma}))]\dd{t}\bigg|.
    \end{align*}
    Therefore, by Lemma~\ref{lem:diff_gaussianCDF} and $0\leq \Phi(t) \leq 1$,
    \begin{align*}
    &\leq 1\int_{-\infty}^{\infty} \bigg| \Phi(\frac{t-d_2}{\sigma})- \Phi(\frac{t-d_1}{\sigma})\bigg| \dd{t} +  1\int_{-\infty}^{\infty} \bigg| \Phi(\frac{t-b_1}{\sigma})- \Phi(\frac{t-b_2}{\sigma})\bigg| \dd{t} \\
&\leq | d_1 - d_2 | + | b_1 - b_2|.
\end{align*}
\end{proof}

Before proceeding to the first stability result, we will fix some notation that will be used in the remainder of this section. Recall that we defined $\delta_D=\min_{(b,d)\in D}(d-b)$. We will make the convention that when $D = \varnothing$, $\delta_D := \inf_{D}(d-b) := \infty$. For two diagrams $C$ and $D$, we further define $\delta_{C, D}=\min\{\delta_C, \delta_D, 1\}$. Now, fix a minimal cost matching $\gamma$ between two persistence diagrams $C$ and $D$. Let $C'$ be the points of $C$ which match with points of $D$ under $\gamma$, and let $D'$ be the image of $C'$ under this matching. Let $E=(C\setminus C')\cup (D\setminus D')$, that is $E$ consists of the points of $C$ and $D$ which match to the main diagonal under $\gamma$. We denote by $G_E(t)$ the unweighted Gaussian persistence curve on $E$. We note that since $\gamma$ has been assumed to be a minimal cost matching, all points of $E$ are matched to their closest point on the main diagonal under $\gamma$. In particular, if $(b, d)\in E$, then the contribution that this point makes to the cost of $\gamma$ is $\frac{d-b}{2}$. Finally, we enumerate the points of $C'$ as $\{(b_1, d_1),...,(b_N, d_N)\}$, and denote the image of an enumerated point under $\gamma$ as $\gamma(b_i,d_i) = (\gamma_{b_i},\gamma_{d_i})$.

We can now proceed to prove our first stability result for the case of unweighted Gaussian persistence curves. 
\begin{theorem}\label{t:unwstab}
Let $C$ and $D$ be persistence diagrams and let $G_C(t)$ and $G_D(t)$ be their unweighted Gaussian persistence curves. Then 
\begin{equation}
    \| G_C(t) - G_D(t) \|_1 \leq AW_1(C,D)
\end{equation}
Where $A=\max\left\{2,2\left(1+\frac{\sigma}{\sqrt{\pi}\delta_{C, D}}\right)\right\} $
\end{theorem}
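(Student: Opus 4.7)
The plan is to use the matching $\gamma$ (with the notation $C'$, $D'$, $E$, and enumeration $(b_i,d_i)$, $(\gamma_{b_i},\gamma_{d_i})$ set up before the theorem statement) to split $G_C(t) - G_D(t)$ into two pieces: a ``paired'' piece coming from points matched to each other in $C' \leftrightarrow D'$, and a ``diagonal'' piece coming from points in $E$ matched to the diagonal. Concretely, using the CDF realization,
\[
G_C(t) - G_D(t) = \sum_{i=1}^N \Bigl[\Phi(\tfrac{t-b_i}{\sigma})\Phi(\tfrac{d_i-t}{\sigma}) - \Phi(\tfrac{t-\gamma_{b_i}}{\sigma})\Phi(\tfrac{\gamma_{d_i}-t}{\sigma})\Bigr] + G_{C\setminus C'}(t) - G_{D\setminus D'}(t).
\]
After applying the triangle inequality for $\|\cdot\|_1$, these two pieces are handled independently.

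For the paired piece, I would apply Lemma~\ref{lem:g1-g2 bdd by W} term by term, obtaining the bound
\[
\sum_{i=1}^N \int_{-\infty}^{\infty}\Bigl|\Phi(\tfrac{t-b_i}{\sigma})\Phi(\tfrac{d_i-t}{\sigma}) - \Phi(\tfrac{t-\gamma_{b_i}}{\sigma})\Phi(\tfrac{\gamma_{d_i}-t}{\sigma})\Bigr| \dd{t} \leq \sum_{i=1}^N \bigl(|b_i - \gamma_{b_i}| + |d_i - \gamma_{d_i}|\bigr) \leq 2 \sum_{i=1}^N \|(b_i,d_i) - (\gamma_{b_i},\gamma_{d_i})\|_\infty,
\]
so this piece contributes at most $2$ times the matched-pair portion of $L(\gamma)$.

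For the diagonal piece, since the two summands are both sums of nonnegative functions, I would use the crude bound $|G_{C\setminus C'} - G_{D\setminus D'}| \leq G_{C\setminus C'} + G_{D\setminus D'} = G_E$ and then apply Corollary~\ref{unweighted1bound}:
\[
\|G_{C\setminus C'} - G_{D\setminus D'}\|_1 \leq \|G_E\|_1 \leq \Bigl(1+\tfrac{\sigma}{\sqrt{\pi}\delta_E}\Bigr) L_E.
\]
Since $E \subseteq C \cup D$ we have $\delta_E \geq \min(\delta_C,\delta_D) \geq \delta_{C,D}$. Because $\gamma$ is a minimum-cost matching, each $(b,d) \in E$ is matched to the nearest diagonal point $((b+d)/2,(b+d)/2)$ and so contributes exactly $(d-b)/2$ to $L(\gamma)$; in particular $L_E = 2 \cdot (\text{contribution of }E \text{ to } L(\gamma))$. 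Hence the diagonal piece is bounded by $2(1+\sigma/(\sqrt{\pi}\delta_{C,D}))$ times the diagonal-matched portion of $L(\gamma)$.

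Combining the two bounds with a common constant $A = \max\{2, 2(1+\sigma/(\sqrt{\pi}\delta_{C,D}))\}$ yields $\|G_C - G_D\|_1 \leq A \cdot L(\gamma) = A\, W_1(C,D)$. The only subtle point is the diagonal piece: one must remember to multiply $L_E$ by $2$ to convert from total lifespan to matching cost, and must verify that $\delta_E$ is controlled by $\delta_{C,D}$ (which also explains why the convention $\min \emptyset = \infty$ is invoked, so that empty $E$ poses no issue). The paired piece is purely a consequence of Lemma~\ref{lem:g1-g2 bdd by W} and requires no further work.
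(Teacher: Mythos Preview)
Your proof is correct and follows essentially the same approach as the paper: split into a paired piece handled by Lemma~\ref{lem:g1-g2 bdd by W} and a diagonal piece handled by Corollary~\ref{unweighted1bound}, then combine under the common constant $A$. If anything, your write-up is more explicit than the paper's about the factor of $2$ converting $L_E$ into matching cost and about the inequality $\delta_E \geq \delta_{C,D}$.
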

\begin{proof}

We first consider the case when neither $C$ nor $D$ are the empty diagram.
\begin{align*}
    &\| G_C(t) - G_D(t) \|_1 \\ 
    & \leq \int_{-\infty}^{\infty}| \sum_{i=1}^N \Phi(\frac{t-b_i}{\sigma}) \Phi(\frac{d_i-t}{\sigma}) - \Phi(\frac{t-\gamma_{b_i}}{\sigma}) \Phi(\frac{\gamma_{d_i}-t}{\sigma})|+|\sum_{(b, d)\in E}\Phi(\frac{t-b_i}{\sigma}) \Phi(\frac{d_i-t}{\sigma})|~\dd{t}\\
    &\leq \sum_{i=1}^N|d_i-\gamma_{d_i}| + |b_i-\gamma_{b_i}|+\|G_E(t)\|_1\\
    &\leq 2\sum_{i=1}^N\|(b_i,d_i)-(\gamma_{b_i},\gamma_{d_i})\|_{\infty} + \left(1+\frac{\sigma}{\sqrt{\pi}\delta_E}\right)\sum_{(b,d)\in E}(d-b)\\
    &\leq \max\left\{2, 2\left(1+\frac{\sigma}{\sqrt{\pi}\delta_{C,D}} \right)  \right\}W_1(C,D)\\ 
    &= AW_1(C,D).
\end{align*}
Here the second inequality follows from Lemma \ref{lem:g1-g2 bdd by W} and Corollary \ref{unweighted1bound}.

\end{proof}

One might hope to remove terms depending on the diagram from $A$. However, some term involving $\delta_{C,D}$ will necessarily appear in this bound. To see this, suppose $D=C\cup E$ where $E$ consists of $k$ points with total lifespan small when compared to $\delta_C$. In other words suppose $L_E \leq \varepsilon$ and $\varepsilon$ is sufficiently small. Further, note that the optimal matching between diagrams $C$ and $D$ matches all points of $E$ to the diagonal. Then $W_1(C,D) = L_E \leq \varepsilon$. However, one can see by way of Proposition \ref{1-normGPC} that $\|G_C(t)-G_D(t)\|_1=\|G_E(t)\|_1\geq \frac{\sigma k}{\sqrt{\pi}}$. For this reason it seems natural to consider weights $\kappa(b,d)$ which go to zero as $b\to d$. E.g. one can take the lifespan function $\kappa(b,d) = d-b$.

We now consider the case where $G_{C, \kappa_C}(t)$ and $G_{D, \kappa_D}(t)$ are weighted Gaussian persistence curves on diagrams $C$ and $D$. We will derive a bound for arbitrary weights and then show how this bound can be improved by imposing restrictions on the weight functions. 

We keep the same notation as before, but in this case we also need to define the following notation for the weighting function. For each $(b, d)\in E$, define $\kappa_E(b, d)$ to be $\kappa_C(b, d)$ if $(b,d)\in C$ or to be $\kappa_D(b, d)$ if $(b, d)\in D$. Let $M_C=\max_{(b, d)\in C}|\kappa_C(b, d)|$ and define $M_D$, $M_E$, $M_{C'}$, and $M_{D'}$ similarly. Let $M_{C, D}=\max\{M_C, M_D\}$, and $M_{\gamma}=\max_{(b, d)\in C'}|\kappa_C(b, d)-\kappa_D(\gamma_b, \gamma_d)|$.

\begin{theorem}\label{t:wstab1}
\begin{equation}
    \| G_{C, \kappa_C}(t)-G_{D, \kappa_D}(t) \|_1 \leq BW_1(C, D)+M_\gamma\|G_{D'}\|_1
\end{equation}
with $B=\max\{2 M_C, 2M_E\left(1+\frac{\sigma} {\sqrt{\pi}\delta_E}\right)\}$.
\end{theorem}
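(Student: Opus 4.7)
The plan is to mirror the structure of the proof of Theorem \ref{t:unwstab}, but with an additional split for the matched pairs that separates the difference in weights from the difference in the Gaussian profiles. Starting from the CDF realizations, I would write
$$G_{C,\kappa_C}(t)-G_{D,\kappa_D}(t) = \sum_{i=1}^N\bigl[\kappa_C(b_i,d_i)h_i(t)-\kappa_D(\gamma_{b_i},\gamma_{d_i})h'_i(t)\bigr] - G_{E,\kappa_E}(t),$$
where $h_i(t)=\Phi\bigl(\tfrac{t-b_i}{\sigma}\bigr)\Phi\bigl(\tfrac{d_i-t}{\sigma}\bigr)$, $h'_i(t)$ is the analogous product with $(\gamma_{b_i},\gamma_{d_i})$, and $G_{E,\kappa_E}(t)$ is the weighted GPC coming from the unmatched points (with weights inherited from $\kappa_C$ or $\kappa_D$ as appropriate).

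Next I apply the triangle inequality inside the integral and add/subtract $\kappa_C(b_i,d_i)h'_i(t)$ in each matched term, giving
$$\kappa_C(b_i,d_i)h_i-\kappa_D(\gamma_{b_i},\gamma_{d_i})h'_i = \kappa_C(b_i,d_i)\bigl[h_i-h'_i\bigr] + \bigl[\kappa_C(b_i,d_i)-\kappa_D(\gamma_{b_i},\gamma_{d_i})\bigr]h'_i.$$
Integrating the first piece over $\mathbb{R}$ and using Lemma \ref{lem:g1-g2 bdd by W} bounds it by $|\kappa_C(b_i,d_i)|\bigl(|b_i-\gamma_{b_i}|+|d_i-\gamma_{d_i}|\bigr)\le 2M_C\|(b_i,d_i)-(\gamma_{b_i},\gamma_{d_i})\|_\infty$. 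Integrating the second piece yields $|\kappa_C(b_i,d_i)-\kappa_D(\gamma_{b_i},\gamma_{d_i})|\cdot\int h'_i(t)\,dt$, and summing over $i$ this is at most $M_\gamma\|G_{D'}\|_1$, which is exactly the second term of the claimed bound.

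For the contribution from $E$, I use the triangle inequality to pass the $L^1$ norm inside the sum and then apply Corollary \ref{weighted1bound} to $G_{E,\kappa_E}$, obtaining
$$\|G_{E,\kappa_E}\|_1 \le M_E\Bigl(1+\tfrac{\sigma}{\sqrt{\pi}\,\delta_E}\Bigr)L_E.$$
Since $\gamma$ is an optimal matching, every $(b,d)\in E$ contributes $\tfrac{d-b}{2}$ to $L(\gamma)$, so $L_E = 2\sum_{(b,d)\in E}\tfrac{d-b}{2}$ is at most $2W_1(C,D)$ minus twice the matched contribution; more simply, combining with the matched bound gives $2M_C\sum_{i}\|(b_i,d_i)-(\gamma_{b_i},\gamma_{d_i})\|_\infty + 2M_E(1+\tfrac{\sigma}{\sqrt{\pi}\delta_E})\sum_{E}\tfrac{d-b}{2} \le B\cdot L(\gamma)=B\cdot W_1(C,D)$.

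The main obstacle is bookkeeping: one has to choose which weight to pair with the $h_i-h'_i$ factor so that the leftover "weight difference" times $h$ sums up to something recognizable, and the choice $\kappa_C$ on $h_i-h'_i$ and $\kappa_C-\kappa_D\circ\gamma$ on $h'_i$ is the one that produces $M_\gamma\|G_{D'}\|_1$ rather than $M_\gamma\|G_{C'}\|_1$. The empty-diagram edge cases are handled exactly as in Theorem \ref{t:unwstab}, using the convention $\delta_\varnothing=\infty$ so that all the relevant sums collapse.
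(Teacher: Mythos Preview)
Your proposal is correct and follows essentially the same approach as the paper's proof: both split the matched terms via the add/subtract trick $\kappa_C(b_i,d_i)h'_i$ (the paper writes $w_iH_i$), invoke Lemma~\ref{lem:g1-g2 bdd by W} for the $h_i-h'_i$ piece and collect the weight-difference piece as $M_\gamma\|G_{D'}\|_1$, and handle the diagonal-matched points $E$ via Corollary~\ref{weighted1bound}. One small cosmetic point: your displayed equality ``$\ldots - G_{E,\kappa_E}(t)$'' is not literally correct, since the $C\setminus C'$ and $D\setminus D'$ summands enter with opposite signs, but this is harmless once you pass to absolute values and the triangle inequality, exactly as the paper does.
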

\begin{proof}

In order to simplify notation, let $w_i=\kappa_C(b_i, d_i)$, $u_i=\kappa_D(\gamma_{b_i}, \gamma_{c_i})$, $F_i(t)=\Phi(\frac{t-b_i}{\sigma})\Phi(\frac{d_i-t}{\sigma})$, and $H_i(t)=\Phi(\frac{t-\gamma_{b_i}}{\sigma})\Phi(\frac{\gamma_{d_i}-t}{\sigma})$. With this notation, the CDF realizations of $G_{C, \kappa_C}(t)$ and $G_{D, \kappa_D}(t)$ are given by
\[
G_{C, \kappa_C}(t)=\sum_{i=1}^N w_iF_i(t)+\sum_{(b, d)\in C\setminus C'}\kappa_C(b, d)\Phi(\frac{t-b}{\sigma}) \Phi(\frac{d-t}{\sigma})
\]

\[
G_{D, \kappa_D}(t)=\sum_{i=1}^N u_iH_i(t)+\sum_{(b, d)\in D\setminus D'}\kappa_D(b, d)\Phi(\frac{t-b}{\sigma}) \Phi(\frac{d-t}{\sigma})
\]
Hence
\begin{align*}
    &\| G_{C, \kappa_C} - G_{D, \kappa_D} \|_1\\ 
    &\leq \sum_{i=1}^N\int_{-\infty}^{\infty}|w_iF_i-u_iH_i |~\dd{t}+\sum_{(b, d)\in E}\int_{-\infty}^{\infty}|\kappa_E(b, d)\Phi(\frac{t-b}{\sigma}) \Phi(\frac{d-t}{\sigma})|~\dd{t}\\
\end{align*}

We consider each component of this sum separately. By Corollary \ref{weighted1bound} we have,
\begin{align*}
&\sum_{(b, d)\in E}\int_{-\infty}^{\infty}|\kappa_E(b, d)\Phi(\frac{t-b}{\sigma}) \Phi(\frac{d-t}{\sigma})|~\dd{t}=\|G_{E, \kappa_E}\|_1\\
 &\leq \sum_{(b, d)\in E} \left(1+\frac{\sigma}{\sqrt{\pi}\delta_E}\right)\frac{2M_E(d-b)}{2}
\end{align*}
Now we consider $\sum_{i=1}^N\int_{-\infty}^{\infty}|w_iF_i-u_iH_i |~\dd{t}$.
\begin{align*}
    &\sum_{i=1}^N\int_{-\infty}^{\infty}|w_iF_i-u_iH_i |~\dd{t} \\
    &\leq \sum_{i=1}^N\int_{-\infty}^{\infty}|w_iF_i-w_iH_i+w_iH_i-u_iH_i |~\dd{t}\\
    &\leq \sum_{i=1}^Nw_i\int_{-\infty}^{\infty}|F_i-H_i|~\dd{t}+|w_i-u_i|\int_{-\infty}^\infty |H_i(t) |~\dd{t}\\
    &\leq 2M_C\sum_{i=1}^N\|(b_i,d_i)-(\gamma_{b_i},\gamma_{d_i})\|_\infty+M_\gamma\|G_{D'}\|_1
    \end{align*}
Combining the above inequalities and letting $B=\max\{2 M_C, 2M_E\left(1+\frac{\sigma}{\sqrt{\pi}\delta_E}\right)\}$,

\[
\| G_{C, \kappa_C} - G_{D, \kappa_D} \|_1\leq BW_1(C, D)+M_\gamma\|G_{D'}\|_1.
\]

\end{proof}

 One might hope to improve this bound by removing the additive term, $M_{\gamma}\|G_{D'}\|_1$. However, we can show that without additional assumptions placed on the weighting functions such an improvement is impossible. Indeed, suppose that $C=\{b, d\}$, $D=\{b+1, d+1\}$, $\kappa_C(b, d)=1$ and $\kappa_D(b+1, d+1)=u$. Then for all $d-b$ sufficiently large, $W_1(C, D)=1$ but  
 \[
\| G_{C, \kappa_C} - G_{D, \kappa_D} \|_1\geq (u-1)\|G_D\|_1
\]
 
 In other words, $\| G_{C, \kappa_C} - G_{D, \kappa_D} \|_1$ goes to infinity as either $u$ or $d-b$ goes to infinity.

To avoid this potentiality we will impose additional restrictions on the weighting functions $\kappa_C$ and $\kappa_D$. Precisely,  we will assume that there exists a constant $K$ such that for all $(x, y)$ and $(w, z)$,
\[
|\kappa_C(x, y)-\kappa_D(w, z)|\leq K\|(x, y)-(w, z)\|_\infty
.\]

This holds, for example, when $\kappa_C=\kappa_D$ is $K$--lipschitz with respect to the $L^\infty$ norm on $\mathbb R^2$. Using the notation of the above proof we will have
\[
\sum_{i=1}^N |w_i-u_i|\leq \sum_{i=1}^N  K\|(b_i,d_i)-(\gamma_{b_i},\gamma_{d_i})\|_\infty
\]

We can then replace $M_\gamma$ with $KW_1(C, D)$ in the above bound and thus obtain a bound that does not depend on the matching $\gamma$.

\begin{corollary}\label{c:wstab2}
Suppose there exists a constant $K$ such that for all $(x, y)$ and $(w, z)$,
\[
|\kappa_C(x, y)-\kappa_D(w, z)|\leq K\|(x, y)-(w, z)\|_\infty
\]
Let 
\begin{align*}
J&=\max\{B, K\|G_{D'}\|_1\}=\max\{2 M_C, 2M_E\left(1+\frac{\sigma} {\sqrt{\pi}\delta_E}\right), K\|G_{D'}\|_1\}\\
&\leq \max\{2 M_C, 2M_{C,D}\left(1+\frac{\sigma}{\sqrt{\pi}\delta_{C,D}}\right), K\|G_{D}\|_1\}.
\end{align*}
Then,
\[
  \| G_{C, \kappa}(t)-G_{D, \kappa}(t) \|_1 \leq JW_1(C, D).
\]
\end{corollary}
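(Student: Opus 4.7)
The plan is to bootstrap from Theorem~\ref{t:wstab1}, which already gives
\[
\| G_{C, \kappa_C} - G_{D, \kappa_D} \|_1 \leq B W_1(C, D) + M_\gamma \|G_{D'}\|_1,
\]
and then control the residual term $M_\gamma \|G_{D'}\|_1$ using the new Lipschitz hypothesis on the weighting functions. The first step is to observe that, for each matched pair $(b_i,d_i)$, $(\gamma_{b_i},\gamma_{d_i})$, the Lipschitz bound
\[
|\kappa_C(b_i, d_i) - \kappa_D(\gamma_{b_i}, \gamma_{d_i})| \leq K \|(b_i, d_i) - (\gamma_{b_i}, \gamma_{d_i})\|_\infty
\]
implies $M_\gamma \leq K \cdot \max_i \|(b_i, d_i) - (\gamma_{b_i}, \gamma_{d_i})\|_\infty \leq K W_1(C, D)$, since any individual matched distance is at most the total matching cost $L(\gamma) = W_1(C, D)$. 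Plugging this back into the Theorem~\ref{t:wstab1} bound yields
\[
\| G_{C, \kappa_C} - G_{D, \kappa_D} \|_1 \leq \bigl(B + K\|G_{D'}\|_1\bigr) W_1(C, D) \leq J W_1(C, D),
\]
where $J = \max\{B, K\|G_{D'}\|_1\}$ absorbs the numerical constant.

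For the second inequality, which replaces $B$ and $\|G_{D'}\|_1$ by quantities that depend only on $C$, $D$, $\delta_{C,D}$, and $\|G_D\|_1$, I would run three routine monotonicity comparisons. First, since $E = (C \setminus C') \cup (D \setminus D') \subseteq C \cup D$, we have $M_E \leq M_{C,D}$. Second, $\delta_E \geq \min\{\delta_C, \delta_D\} \geq \delta_{C,D}$, so $1 + \tfrac{\sigma}{\sqrt{\pi}\delta_E}$ can be enlarged to $1 + \tfrac{\sigma}{\sqrt{\pi}\delta_{C,D}}$. Third, the CDF realization writes $G_D$ as a sum of nonnegative functions indexed by the points of $D$, so truncating the index set to $D' \subseteq D$ only decreases the $L^1$-norm, giving $\|G_{D'}\|_1 \leq \|G_D\|_1$. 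Taking the maximum of the three resulting quantities delivers the bound stated in the corollary.

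The only real work is the first step, namely the Lipschitz-based translation of $M_\gamma$ into a multiple of $W_1(C, D)$; once this is in hand, the rest of the argument is bookkeeping built on Theorem~\ref{t:wstab1} and elementary monotonicity of $M_{(\cdot)}$, $\delta_{(\cdot)}$, and $\|G_{(\cdot)}\|_1$. The analytically substantive estimates have already been shouldered by Theorem~\ref{t:wstab1}, so this corollary functions primarily as a cleanup result that eliminates the dependence of the bound on the specific optimal matching $\gamma$.
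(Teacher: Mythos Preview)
Your approach is exactly the one the paper takes: the text preceding the corollary simply notes that the Lipschitz hypothesis gives $\sum_i|w_i-u_i|\leq K\sum_i\|(b_i,d_i)-(\gamma_{b_i},\gamma_{d_i})\|_\infty$, so that $M_\gamma$ may be replaced by $KW_1(C,D)$ in the conclusion of Theorem~\ref{t:wstab1}; your additional monotonicity remarks ($M_E\leq M_{C,D}$, $\delta_E\geq\delta_{C,D}$, $\|G_{D'}\|_1\leq\|G_D\|_1$) for the second displayed inequality are correct and more explicit than what the paper writes.

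One small point worth flagging: the displayed step $\bigl(B + K\|G_{D'}\|_1\bigr)W_1(C,D)\leq J\,W_1(C,D)$ with $J=\max\{B,K\|G_{D'}\|_1\}$ is not literally true as written (a sum is not bounded by the max of its summands); the paper's informal ``replace $M_\gamma$ by $KW_1(C,D)$'' carries the same looseness. If you want the stated $J$ without an extra factor of~$2$, insert the Lipschitz bound one line earlier inside the proof of Theorem~\ref{t:wstab1}, so that both the $2M_C$ contribution and the $K\|G_{D'}\|_1$ contribution multiply only the $C'$--$D'$ portion of the matching cost, while $2M_E(1+\sigma/(\sqrt{\pi}\delta_E))$ multiplies only the $E$ portion; taking the max over these coefficients then recovers a single constant times $W_1(C,D)$.
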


We now consider a few natural weight functions, starting with the lifespan function $\ell(b, d)=d-b$. Note that $\ell$ is $1$--lipschitz. We proceed as in the proof of the previous theorem, but in this case we split $E$ into disjoint subdiagrams $E'$ and $E''$ where $E'=\{(b, d)\in E\;|\; d-b\geq 1\}$ and $E''=E\setminus E'$.

\begin{align*}
&\sum_{(b, d)\in E} (d-b)(d-b+\frac{\sigma}{\sqrt{\pi}})\\
&=\sum_{(b, d)\in E'} (d-b)(d-b+\frac{\sigma}{\sqrt{\pi}})+\sum_{(b, d)\in E''} (d-b)(d-b+\frac{\sigma}{\sqrt{\pi}})\\
&\leq \sum_{(b, d)\in E'}\left( M_{E'}+\frac{\sigma}{\sqrt{\pi}}\right)(d-b)+\sum_{(b, d)\in E''} (d-b)(1+\frac{\sigma}{\sqrt{\pi}})\\
&\leq \max\{2\left(M_{E}+\frac{\sigma}{\sqrt{\pi}}\right), 2+\frac{2\sigma}{\sqrt{\pi}}\}\sum_{(b, d)\in E}(d-b)
\end{align*}

Since each $(b, d)\in E$ contributes $\frac{d-b}{2}$ to the cost of $\gamma$, we obtain
\begin{theorem}\label{t:wstab3}
 Let $G=\max\{2 M_C, \|G_{D'}\|_1,2\left(M_{E}+\frac{\sigma}{\sqrt{\pi}}\right), 2+\frac{2\sigma}{\sqrt{\pi}}\}\leq \max\{2 M_C, \|G_{D}\|_1, 2\left(M_{C,D}+\frac{\sigma}{\sqrt{\pi}}\right), 2+\frac{2\sigma}{\sqrt{\pi}}\}$. Then

\[
  \| G_{C, \ell}(t)-G_{D, \ell}(t) \|_1 \leq G W_1(C, D).
\]
\end{theorem}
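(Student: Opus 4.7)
The plan is to bootstrap from the bound in Theorem \ref{t:wstab1}, exploiting two special features of the lifespan weight $\ell(b,d) = d - b$. First, when we apply Proposition \ref{1-normGPC} with $\kappa = \ell$, the weighted $L^1$ norm bound acquires a factor $\ell(b,d) \cdot [(d-b) + \sigma/\sqrt{\pi}]$, which is quadratic in the lifespan and so can be controlled without a $1/\delta_E$ factor. Second, $\ell$ is Lipschitz on the plane, which will let me replace the $M_\gamma \|G_{D'}\|_1$ term in Theorem \ref{t:wstab1} with something of the form $\|G_{D'}\|_1 \cdot W_1(C,D)$.

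Starting from
\[
\|G_{C,\ell} - G_{D,\ell}\|_1 \;\leq\; \sum_{i=1}^N \int_{-\infty}^\infty |w_i F_i(t) - u_i H_i(t)|\,dt \;+\; \|G_{E,\ell}\|_1,
\]
I will handle the unmatched and matched pieces separately. For the unmatched piece, Proposition \ref{1-normGPC} applied with $\kappa = \ell$ gives $\|G_{E,\ell}\|_1 \leq \sum_{(b,d)\in E} (d-b)[(d-b) + \sigma/\sqrt{\pi}]$. I then use the splitting $E = E' \sqcup E''$ with $E' = \{(b,d)\in E : d-b \geq 1\}$ already set up in the calculation immediately preceding the theorem: on $E'$ the quadratic term $(d-b)^2$ is dominated by $M_{E'}(d-b) \leq M_E(d-b)$, while on $E''$ it is dominated by $(d-b)$. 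The resulting sum collapses to $\max\{M_E + \sigma/\sqrt{\pi},\, 1 + \sigma/\sqrt{\pi}\} \sum_{(b,d)\in E}(d-b)$, and since each $(b,d) \in E$ contributes $(d-b)/2$ to $L(\gamma)$, the factor of $2$ is absorbed to yield a contribution of $\max\{2(M_E + \sigma/\sqrt{\pi}),\, 2 + 2\sigma/\sqrt{\pi}\}\,W_1(C,D)$.

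For the matched piece I follow the same decomposition used in Theorem \ref{t:wstab1}: $|w_i F_i - u_i H_i| \leq w_i|F_i - H_i| + |w_i - u_i|H_i$. By Lemma \ref{lem:g1-g2 bdd by W}, the first term contributes at most $2 M_C \sum_i \|(b_i,d_i)-(\gamma_{b_i},\gamma_{d_i})\|_\infty \leq 2 M_C W_1(C,D)$. For the second, the Lipschitz property of $\ell$ gives $|w_i - u_i| = |(d_i - \gamma_{d_i}) - (b_i - \gamma_{b_i})|$, which is controlled by $\|(b_i,d_i)-(\gamma_{b_i},\gamma_{d_i})\|_\infty$ and hence by $W_1(C,D)$; pulling this uniform bound out of the sum and using $\sum_i \int H_i = \|G_{D'}\|_1$ yields a contribution of the form $\|G_{D'}\|_1 \cdot W_1(C,D)$. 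Taking the maximum of the four resulting constants produces $G$, and the second (looser) expression in the theorem statement follows by observing $M_E \leq M_{C,D}$, $\|G_{D'}\|_1 \leq \|G_D\|_1$, and $\delta_E \geq \delta_{C,D}$.

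The main obstacle is the quadratic lifespan factor $(d-b)^2$ hidden inside $\|G_{E,\ell}\|_1$: if one were to bound it crudely by $M_E(d-b)$ on all of $E$, the resulting constant would degrade as soon as $E$ contains even a single long-lived point, even though such a point contributes only linearly to $W_1(C,D)$. The threshold $d - b = 1$ is chosen precisely so that the quadratic term is replaced by a linear one whose coefficient depends on $M_E$ only on $E'$ and on a universal constant on $E''$, which is exactly what eliminates the $1/\delta_E$ dependence that weakened Corollary \ref{c:wstab2}.
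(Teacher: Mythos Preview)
Your proposal is correct and follows essentially the same route as the paper: you reuse the decomposition from Theorem~\ref{t:wstab1}, invoke the $E'/E''$ split on lifespans (which is exactly the computation the paper places immediately before the theorem) to control $\|G_{E,\ell}\|_1$ without a $1/\delta_E$ factor, and use the Lipschitz property of $\ell$ to convert the $M_\gamma\|G_{D'}\|_1$ term into $\|G_{D'}\|_1\,W_1(C,D)$. One small point worth tightening: $\ell(x,y)=y-x$ has Lipschitz constant $2$ (not $1$) with respect to $\|\cdot\|_\infty$, so the honest bound on $|w_i-u_i|$ is $2\|(b_i,d_i)-(\gamma_{b_i},\gamma_{d_i})\|_\infty$; the paper is equally loose here, so this does not distinguish your argument from theirs.
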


We can remove the dependence of our bound on $\|G_D\|_1$ by normalizing the lifespan functions. Recall that $L_D$ is the total lifespan of the diagram $D$.

Recall that by Corollary \ref{unweighted1bound} 
\[
\|G_D\|_1\leq \left(1+\frac{\sigma}{\sqrt{\pi}\delta_D}\right)L_D.
\]
It is also straightforward to check (see \cite[Section 5.2]{chunglawson2019}) that
\[
|L_C-L_D|\leq 2W_1(C, D).
\]

Define $\hat{\ell}_D(b, d)=\frac{d-b}{\overline{L}_D}$. We will assume that diagrams $C$ and $D$ both have total lifespan at least $1$, and hence $M_{C, D}\leq 1$.
\begin{theorem}\label{t:wstab4}
\[
  \| G_{C, \hat{\ell}_C}(t)-G_{D, \hat{\ell}_D}(t) \|_1 \leq P W_1(C, D).
\]
Where $P=\max\{2, 2\left(M_E+\frac{\sigma}{\sqrt{\pi}}\right), 2+\frac{2\sigma}{\sqrt{\pi}}, 4+\frac{4\sigma}{\sqrt{\pi}\delta_D}\}\leq \max\{2, 2+\frac{2\sigma}{\sqrt{\pi}}, 4+\frac{4\sigma}{\sqrt{\pi}\delta_D}\}$
\end{theorem}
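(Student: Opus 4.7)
The strategy is to adapt the decomposition used in the proofs of Theorems~\ref{t:wstab1} and \ref{t:wstab3}. The key new difficulty is that $\hat{\ell}_C$ and $\hat{\ell}_D$ are not the same function -- they differ by the normalizers $L_C$ versus $L_D$ -- so Corollary~\ref{c:wstab2} cannot be applied directly. Fix a minimum-cost matching $\gamma$ and keep the notation from the proof of Theorem~\ref{t:wstab1}, setting $w_i = \hat{\ell}_C(b_i,d_i) = (d_i-b_i)/L_C$ and $u_i = \hat{\ell}_D(\gamma_{b_i},\gamma_{d_i}) = (\gamma_{d_i}-\gamma_{b_i})/L_D$. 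I would start from
\[
\|G_{C,\hat{\ell}_C} - G_{D,\hat{\ell}_D}\|_1 \leq \sum_{i=1}^N w_i \!\int_{-\infty}^\infty\! |F_i - H_i|\,\dd{t} + \sum_{i=1}^N |w_i - u_i|\,\|H_i\|_1 + \|G_{E,\hat{\ell}_E}\|_1,
\]
and bound each of the three pieces separately.

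For the first sum, note $w_i \leq 1$ since $d_i-b_i \leq L_C$. Lemma~\ref{lem:g1-g2 bdd by W} then gives a total contribution of at most $2\,W_1(C,D)$, accounting for the $2$ in $P$. For the boundary term $\|G_{E,\hat{\ell}_E}\|_1$, observe that on every point of $E$ we have $|\hat{\ell}_E(b,d)| \leq d-b$ (using the assumption $L_C,L_D \geq 1$), so Corollary~\ref{weighted1bound} bounds each summand pointwise by $(d-b)[(d-b)+\sigma/\sqrt\pi]$, exactly as in the lifespan-weighted case. Repeating the split of $E$ into pieces with lifespan $\geq 1$ and $<1$ from the proof of Theorem~\ref{t:wstab3} bounds this term by $\max\{2(M_E+\sigma/\sqrt\pi),\,2+2\sigma/\sqrt\pi\}\,W_1(C,D)$.

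The middle sum is the principal obstacle, since the mismatched normalizations must be disentangled. I would add and subtract $\hat{\ell}_D(b_i,d_i)$ to obtain
\[
|w_i - u_i| \;\leq\; \frac{(d_i-b_i)\,|L_D-L_C|}{L_C\,L_D} \;+\; \frac{2\,\|(b_i,d_i)-(\gamma_{b_i},\gamma_{d_i})\|_\infty}{L_D},
\]
which cleanly separates the ``mismatched normalization'' error from the ``matched point'' error. Combining with $\|H_i\|_1 \leq (1+\sigma/(\sqrt\pi\delta_D))(\gamma_{d_i}-\gamma_{b_i})$ from Corollary~\ref{unweighted1bound}, and using $(\gamma_{d_i}-\gamma_{b_i})/L_D \leq 1$ to absorb the spare $L_D$ in each factor, the first subsum collapses to $(1+\sigma/(\sqrt\pi\delta_D))|L_C-L_D|$ (after also using $\sum_i (d_i-b_i)/L_C \leq 1$), which by the cited bound $|L_C-L_D|\leq 2W_1(C,D)$ is at most $(2+2\sigma/(\sqrt\pi\delta_D))W_1(C,D)$; the second subsum collapses to $2(1+\sigma/(\sqrt\pi\delta_D))\sum_i\|(b_i,d_i)-(\gamma_{b_i},\gamma_{d_i})\|_\infty \leq (2+2\sigma/(\sqrt\pi\delta_D))W_1(C,D)$. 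Adding them yields the final $(4+4\sigma/(\sqrt\pi\delta_D))W_1(C,D)$ term in $P$. Taking the maximum of the three coefficients produces the stated bound, and the trailing inequality $P \leq \max\{2,\,2+2\sigma/\sqrt\pi,\,4+4\sigma/(\sqrt\pi\delta_D)\}$ follows because $M_E \leq 1$ (every $\hat{\ell}_E$-weight is at most one under the total-lifespan hypothesis).
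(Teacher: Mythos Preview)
Your argument is correct and follows essentially the same route as the paper: the same three-term decomposition from Theorem~\ref{t:wstab1}, the same $E'/E''$ split from Theorem~\ref{t:wstab3} for the boundary term, and the same add-and-subtract of $(d_i-b_i)/L_D$ together with $|L_C-L_D|\le 2W_1(C,D)$ and $\|G_D\|_1\le (1+\sigma/(\sqrt\pi\delta_D))L_D$ for the middle term. The only cosmetic difference is bookkeeping: the paper first bounds each $|w_i-u_i|$ uniformly by $\tfrac{4}{L_D}W_1(C,D)$ and then multiplies by $\sum_i\|H_i\|_1\le\|G_D\|_1$, whereas you keep the $i$-dependent factor $\|H_i\|_1$ and absorb it via $(\gamma_{d_i}-\gamma_{b_i})/L_D\le 1$; both routes land on the same constant $4+4\sigma/(\sqrt\pi\delta_D)$.
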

\begin{proof}

The $2M_C=2$ term and the $\max\{\frac{M_E\sigma}{\sqrt{\pi}}, 1+\frac{\sigma}{\sqrt{\pi}}\}$ terms can be computed as in the general weights theorem and the lifespan computation above (noting that $\frac{d-b}{L_D}$ is always less then $d-b$)

We focus on the $M_\gamma\|G_{D'}\|$ term in the theorem with general weights. this term comes from bounding 
\[
\sum_{i=1}^N|w_i-u_i|\int_{-\infty}^\infty |H_i(t) |~\dd{t}
\]

\begin{align*}
   &\sum_{i=1}^N|\frac{d_i-b_i}{L_C}-\frac{\gamma_{d_i}-\gamma_{b_i}}{L_D}|\int_{-\infty}^\infty |H_i(t) |~\dd{t} \\
   &\leq \sum_{i=1}^N|\frac{d_i-b_i}{L_C}-\frac{d_i-b_i}{L_D}+\frac{d_i-b_i}{L_D}-\frac{\gamma_{d_i}-\gamma_{b_i}}{L_D}|\int_{-\infty}^\infty |H_i(t) |~\dd{t}\\
   &\leq \sum_{i=1}^N|\frac{d_i-b_i}{L_C}\frac{L_D-L_C}{L_D}|+|\frac{(d_i-b_i)-(\gamma_{d_i}-\gamma_{b_i})}{L_D}|\int_{-\infty}^\infty |H_i(t) |~\dd{t}\\
   &\leq \sum_{i=1}^N\frac{4}{L_D}W_1(C, D)\int_{-\infty}^\infty |H_i(t) |~\dd{t}\\
   &\leq\frac{4}{L_D}W_1(C, D)\|G_D\|_1\\
  &\leq\left(4+\frac{4\sigma}{\sqrt{\pi}\delta_D}\right)W_1(C, D).
\end{align*}

\end{proof}

\section{Injectivity}
\label{sec:inj}
In this section we study the injectivity of the transformation from a persistence diagram to either the persistence surface or the corresponding Gaussian persistence curve. By injectivity here we mean that distinct diagrams produce distinct persistence surfaces or curves. In general, this notion depends on the choice of weight functions, see example \ref{ex:counter ex gpc is invertible}. However, we show the injectivity of unweighted persistence surfaces and (in most cases) unweighted Gaussian persistence curves. We also conjecture that any weight functions defined independently of the diagrams will produce injective persistence surfaces and curves.

We first show injectivity for unweighted persistence surfaces. Given two diagrams $C$ and $D$, this amounts to showing that the multi-set of means in the corresponding surfaces is equal. We achieve this by setting up an infinite system of equations, which can only be solved when the multi-sets of means are exactly equal. We will first give some technical lemmas.

\begin{lemma}\label{lem:equal-projections}
Let $A,B\subset\mathbb{R}$ be finite sets of equal cardinality. Suppose \[\sum_{a\in A}\phi\left(\frac{x-a}{\sigma}\right) = \sum_{b\in B}\phi\left(\frac{x-b}{\sigma}\right).\]
Then for every $n\in\mathbb{N}$, \[\sum_{a\in A}a^n = \sum_{b\in B}b^n\]
\end{lemma}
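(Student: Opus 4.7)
The plan is to extract equality of power sums by integrating both sides of the hypothesis against the polynomial test functions $x^n$. For each non-negative integer $n$, the integral $\int_{-\infty}^\infty x^n\, \phi\bigl((x-a)/\sigma\bigr)\, dx$ converges absolutely thanks to the rapid decay of $\phi$, and the substitution $u=(x-a)/\sigma$ combined with the binomial expansion of $(\sigma u+a)^n$ yields
\[
\int_{-\infty}^\infty x^n\, \phi\bigl((x-a)/\sigma\bigr)\, dx \;=\; \sigma \sum_{k=0}^n \binom{n}{k} \sigma^k \mu_k\, a^{n-k},
\]
where $\mu_k := \int_{-\infty}^\infty u^k\, \phi(u)\, du$ denotes the $k$-th moment of the standard normal distribution. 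Since $A$ and $B$ are finite, summing over them term-by-term and using the hypothesis is immediate with no interchange issues.

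Writing $S_m := \sum_{a\in A} a^m - \sum_{b\in B} b^m$, integrating the hypothesis equality against $x^n$ forces the recursion
\[
\sum_{k=0}^n \binom{n}{k} \sigma^k \mu_k\, S_{n-k} \;=\; 0 \qquad \text{for every } n\geq 0.
\]
The assumption $|A|=|B|$ gives $S_0=0$ directly, which also matches the $n=0$ instance since $\mu_0=1$. The crucial feature of the recursion is that the coefficient of $S_n$ is exactly $\mu_0=1$, so $S_n$ is uniquely determined by $S_0,\dots,S_{n-1}$. A straightforward induction on $n$ then gives $S_n = 0$ for all $n$, which is precisely the conclusion of the lemma.

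I do not expect any real obstacle here; the argument is essentially bookkeeping once one commits to testing against polynomials. The only thing to verify carefully is the moment computation, which is a routine change of variables. An equivalent approach would be to take the Fourier transform of both sides of the hypothesis: dividing by the common Gaussian factor $\sigma e^{-\sigma^2 \xi^2/2}$ reduces the identity to $\sum_{a\in A} e^{-i a\xi} = \sum_{b\in B} e^{-i b\xi}$, and comparing Taylor coefficients at $\xi=0$ again recovers equality of all power sums. I would prefer the direct moment computation as it keeps the argument self-contained and within the elementary framework already used elsewhere in the paper.
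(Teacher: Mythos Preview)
Your proposal is correct and follows essentially the same approach as the paper: integrate the hypothesis against $x^n$, expand via the normal moments, and run an induction on $n$ using that the leading coefficient (the $k=0$ term) isolates $S_n$. The paper writes out the explicit even-moment values $(k-1)!!$ rather than your generic $\mu_k$, but the structure of the argument is identical.
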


\begin{proof}
Let $n\in \mathbb{N}$ and suppose

\[\sum_{a\in A}\phi\left(\frac{x-a}{\sigma}\right) = \sum_{b\in B}\phi\left(\frac{x-b}{\sigma}\right).\]

Multiplying by $\frac{x^n}{\sigma}$ and integrating over $\mathbb{R}$ with respect to $x$ gives 
\begin{align}
    \sum_{a\in A} \int_{\mathbb{R}}x^n\frac{1}{\sigma}\phi(\frac{x-a}{\sigma})dx &= \sum_{b\in B} \int_{\mathbb{R}}x^n\frac{1}{\sigma}\phi(\frac{x-b}{\sigma})dx\\
    \label{moments}\sum_{a\in A} \sum_{k \text{ even}}^n {n \choose k}a^{n-k}\sigma^k(k-1)!! &= \sum_{b\in B} \sum_{k \text{ even}}^n {n \choose k}b^{n-k}\sigma^k(k-1)!!.
\end{align}
Here (\ref{moments}) follows from computing the $n^{th}$ moment of the normal distribution. We will now prove that $\sum_{a\in A}a^n = \sum_{b\in B}b^n$ for all $n \in \mathbb{N}$ by induction. When $n=1$ this follows immediately from equation (\ref{moments}) above. Now assume that $\sum_{a\in A}a^m = \sum_{b\in B}b^m$ for all $m<n$. Then again by equation (\ref{moments}) we have, 
\begin{align*}
       \sum_{a\in A} \sum_{k \text{ even}}^n {n \choose k}a^{n-k}\sigma^k(k-1)!! &= \sum_{b\in B} \sum_{k \text{ even}}^n {n \choose k}b^{n-k}\sigma^k(k-1)!!\\
       \sum_{k \text{ even}}^n{n \choose k}\sigma^k(k-1)!!\sum_{a\in A}a^{n-k} &= \sum_{k \text{ even}}^n{n \choose k}\sigma^k(k-1)!!\sum_{b\in B}b^{n-k}\\
       \sum_{a\in A}a^n &= \sum_{b\in B}b^n, 
\end{align*}
where the last step follows by our inductive hypothesis.
\end{proof}

\begin{lemma}\label{lem:injectiveproduct}
Let $A,B\subset \mathbb{R}^2$ be finite sets of equal cardinality. Suppose \[\sum_{(a,b)\in A}\phi\left(\frac{x-a}{\sigma}\right)\phi\left(\frac{y-b}{\sigma}\right) = \sum_{(\alpha,\beta)\in B}\phi\left(\frac{x-\alpha}{\sigma}\right)\phi\left(\frac{y-\beta}{\sigma}\right).\]
Then for any $m_1,m_2\in \mathbb{N}$, \[\sum_{(a,b)\in A}a^{m_1}b^{m_2} = \sum_{(\alpha,\beta)\in B}\alpha^{m_1}\beta^{m_2}. \]
\end{lemma}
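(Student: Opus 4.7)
The plan is to follow the same template as Lemma~\ref{lem:equal-projections}, but using a double integral and a two-variable induction. Starting from the hypothesized equality of the mixture densities, I would multiply both sides by $\frac{x^{m_1} y^{m_2}}{\sigma^{2}}$ and integrate over $\mathbb{R}^{2}$. By Fubini, the integral factors as a product of one-dimensional moments. For each center $a\in\mathbb{R}$ the quantity $\int_{\mathbb{R}} x^{m_1}\frac{1}{\sigma}\phi\!\left(\frac{x-a}{\sigma}\right)dx$ equals the $m_1$th non-central moment of a normal variable, which as in (\ref{moments}) evaluates to
\[
P_{m_1}(a) := \sum_{k\text{ even}}^{m_1}\binom{m_1}{k}a^{\,m_1-k}\sigma^{k}(k-1)!!,
\]
a polynomial with leading term $a^{m_1}$ and all lower terms of strictly smaller degree. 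Defining $Q_{m_2}$ analogously and carrying out the integration yields
\[
\sum_{(a,b)\in A} P_{m_1}(a)\,Q_{m_2}(b) \;=\; \sum_{(\alpha,\beta)\in B} P_{m_1}(\alpha)\,Q_{m_2}(\beta).
\]

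Next I would induct on the total degree $N = m_1 + m_2$. The base case $N=0$ is exactly the assumption $|A| = |B|$. For the inductive step, I expand $P_{m_1}(a)Q_{m_2}(b)$ as a linear combination (with coefficients depending only on $\sigma$, $m_1$, $m_2$) of monomials $a^{m_1-k} b^{m_2-\ell}$ where $k,\ell$ are even and $0\le k\le m_1,\ 0\le\ell\le m_2$. The unique monomial of total degree $N$ is the $k=\ell=0$ term, namely $a^{m_1}b^{m_2}$; every other term has total degree $m_1+m_2-(k+\ell)< N$ since $k+\ell\ge 2$. Applying the inductive hypothesis to each lower-degree pair $(m_1-k,m_2-\ell)$, those sums agree on the two sides and cancel, leaving
\[
\sum_{(a,b)\in A} a^{m_1} b^{m_2} \;=\; \sum_{(\alpha,\beta)\in B} \alpha^{m_1}\beta^{m_2},
\]
which completes the induction.

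The only nontrivial step is verifying that the expansion of $P_{m_1}(a)Q_{m_2}(b)$ really produces no other monomial of total degree $N$; once this is observed the argument reduces to the same bookkeeping that powered Lemma~\ref{lem:equal-projections}. I would also note briefly that the interchange of sum and integral is justified because $A$ and $B$ are finite, and that the polynomial weights $x^{m_1}y^{m_2}$ are integrable against Gaussian densities, so Fubini applies without issue. The main obstacle, such as it is, is just the double-index induction setup; the computational content is essentially identical to the one-dimensional case.
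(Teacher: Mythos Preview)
Your proposal is correct and follows essentially the same approach as the paper: multiply by $x^{m_1}y^{m_2}$, integrate to obtain an identity of products of Gaussian moments, and run strong induction on the total degree $m_1+m_2$ to isolate the top monomial. The only cosmetic difference is that you take $N=0$ (i.e.\ $|A|=|B|$) as the base case, which is slightly cleaner since the induction actually needs the degenerate cases $(m_1,0)$ and $(0,m_2)$ along the way.
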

\begin{proof}
Let $m_1,m_2\in \mathbb{N}$ and suppose \[\sum_{(a,b)\in A}\phi\left(\frac{x-a}{\sigma}\right)\phi\left(\frac{y-b}{\sigma}\right) = \sum_{(\alpha,\beta)\in B}\phi\left(\frac{x-\alpha}{\sigma}\right)\phi\left(\frac{y-\beta}{\sigma}\right).\]

Multiplying by $x^{m_1}y^{m_2}$ and integrating over $\mathbb{R}$ with respect to $x$ and $y$ yields,  
\begin{align}
        \nonumber&\sum_{(a,b)\in A} \int_{\mathbb{R}}x^{m_1}\phi\left(\frac{x-a}{\sigma}\right)dx\int_{\mathbb{R}}y^{m_2}\phi\left(\frac{y-b}{\sigma}\right)dy \\
        = &\sum_{(\alpha,\beta)\in B} \int_{\mathbb{R}}x^{m_1} \phi\left(\frac{x-\alpha}{\sigma}\right)dx\int_{\mathbb{R}}y^{m_2}\phi\left(\frac{y-\beta}{\sigma}\right)dy\\\nonumber
    &\sum_{(a,b)\in A} \sum_{k \text{ even}}^{m_1} {m_1 \choose k}a^{m_1-k}\sigma^k(k-1)!!\sum_{l\text{ even}}^{m_2} {m_2 \choose l}b^{m_2-l}\sigma^l(l-1)!! \\
    = &\sum_{(\alpha,\beta)\in B} \sum_{k \text{ even}}^{m_1} {m_1 \choose k}\alpha^{m_1-k}\sigma^k(k-1)!!\sum_{l\text{ even}}^{m_2} {m_2 \choose l}\beta^{m_2-l}\sigma^l(l-1)!!.\label{equ:equalmoments}
\end{align}
Now when $m_1=m_2=1$, equation \ref{equ:equalmoments} yields \[\sum_{(a,b)\in A}ab = \sum_{(\alpha,\beta)\in B}\alpha\beta.\] We proceed by proving that the claim is true for all $m_1+m_2 =n \in \mathbb{N}$ by strong induction on $n$. The base case has been proven above so assume that the statement is true for all $m_1+m_2=j<n\in\mathbb{N}$. Note that $m_1-k+m_2-l =j$ where $j \in \mathbb{N}$ and $j<n$ for all $k,l\in 2\mathbb{N}$ with $k\leq m_1,l\leq m_2$. Then the inductive hypothesis and equation \ref{equ:equalmoments} yields $$\sum_{(a,b)\in A}a^{m_1}b^{m_2} = \sum_{(\alpha,\beta)\in B}\alpha^{m_1}\beta^{m_2}$$ as desired.
\end{proof}
\begin{theorem}\label{injective}
Let $C$ and $D$ be persistence diagrams and let $\rho_C$ and $\rho_D$ be the corresponding unweighted persistence surfaces. If $\rho_C\equiv\rho_D$, then $C=D$.
\end{theorem}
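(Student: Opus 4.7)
The plan is to reduce injectivity of the persistence surface map to matching all bivariate moments of the two multisets, and then to recover a multiset in $\mathbb{R}^2$ from its moments. Using the factorization $g_{(b,d),\Sigma}(x,y) = \phi\!\left(\frac{x-b}{\sigma}\right)\phi\!\left(\frac{y-d}{\sigma}\right)$ recorded in Section~\ref{sec:smooth pc}, the hypothesis $\rho_C \equiv \rho_D$ becomes
\[
\sum_{(b,d)\in C}\phi\!\left(\frac{x-b}{\sigma}\right)\phi\!\left(\frac{y-d}{\sigma}\right) = \sum_{(b,d)\in D}\phi\!\left(\frac{x-b}{\sigma}\right)\phi\!\left(\frac{y-d}{\sigma}\right).
\]
Integrating both sides over $\mathbb{R}^2$ and using that each summand has the same positive integral yields the cardinality equality $|C|=|D|$, which is the hypothesis required for Lemma~\ref{lem:injectiveproduct}. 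Applying that lemma (its inductive proof extends verbatim from sets to multisets) produces the matched bivariate power moments
\[
\sum_{(b,d)\in C} b^{m_1} d^{m_2} \;=\; \sum_{(b,d)\in D} b^{m_1} d^{m_2} \qquad \text{for every } m_1, m_2 \in \mathbb{N}.
\]

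To pass from matched moments to multiset equality, I would enumerate the distinct elements of $C \cup D$ as $(p_1,q_1),\dots,(p_K,q_K)$ and assign $(p_k,q_k)$ the signed integer weight $n_k$ equal to its multiplicity in $C$ minus its multiplicity in $D$. The claim $C=D$ is equivalent to all $n_k = 0$. To obtain this I would consider the entire function
\[
F(s,t) \;=\; \sum_{k=1}^K n_k\, e^{s p_k + t q_k},
\]
whose Taylor coefficient of bidegree $(m_1,m_2)$ at the origin is $\frac{1}{m_1!\, m_2!}\sum_k n_k p_k^{m_1} q_k^{m_2}$, and this vanishes for every $(m_1,m_2)$ by the matched moments above. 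Hence $F \equiv 0$, and linear independence of the finitely many distinct two-variable exponentials $e^{s p_k + t q_k}$ (for instance, specializing $t = \lambda s$ for generic $\lambda$ so that the products $p_k + \lambda q_k$ are all distinct, and then applying the standard Vandermonde argument in one variable) forces every $n_k = 0$.

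The main obstacle is this last step: passing from equality of all bivariate moments to equality of the two multisets in $\mathbb{R}^2$. The entire-function/linear-independence route sketched above is the cleanest, but one could alternatively choose a generic linear form $(b,d)\mapsto \alpha b + \beta d$, observe that the original matched bivariate moments push forward to matched univariate power moments of the two resulting atomic measures on $\mathbb{R}$, and then invoke the classical fact that a finitely supported measure on $\mathbb{R}$ is determined by its power moments to lift the conclusion back to $\mathbb{R}^2$.
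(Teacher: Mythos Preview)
Your argument is correct and shares its first half with the paper: both establish $|C|=|D|$ by integrating, and both invoke Lemma~\ref{lem:injectiveproduct} to obtain the matched bivariate power sums. The divergence is in how the multiset is reconstructed from those power sums. The paper sorts the points lexicographically and uses an elementary growth argument (the largest base eventually dominates a power sum) to peel off points one at a time; you instead encode the signed multiplicities in the entire function $F(s,t)=\sum_k n_k e^{sp_k+tq_k}$, deduce $F\equiv 0$ from the vanishing of all Taylor coefficients, and finish with linear independence of distinct exponentials via a generic one-variable slice. Your route is cleaner and, notably, is insensitive to the signs of the coordinates, whereas the paper's ``basic fact'' is stated only for non-negative $a_i$ and so tacitly assumes the birth values are non-negative.

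One small point to tighten: Lemma~\ref{lem:injectiveproduct}, as stated and proved, only delivers the mixed moments with $m_1,m_2\geq 1$, so the Taylor coefficients of $F$ along the coordinate axes ($m_1=0$ or $m_2=0$) are not yet accounted for. The paper handles these separately by integrating out one variable and applying Lemma~\ref{lem:equal-projections}; you should either do the same or remark that the moment computation in Lemma~\ref{lem:injectiveproduct} extends verbatim to these boundary cases. Without this, $F\equiv 0$ is not yet justified.
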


\begin{proof}
We will denote the points of $C$ by $(b_i^C,d_i^C)$ and analogously the points of $D$ will be denoted $(b_i^D,d_i^D)$. Assume that $\rho_C=\rho_D$. We first note that $\int_{\mathbb{R}^2}\rho_C = |C|$. Thus, we must have $|C|=|D|$. We will assign $N = |C|$. Thus, we have \[\sum_{i=1}^N\phi\left(\frac{x-b_i^C}{\sigma}\right)\phi\left(\frac{y-d_i^C}{\sigma}\right) = \sum_{i=1}^N\phi\left(\frac{x-b_i^D}{\sigma}\right)\phi\left(\frac{y-d_i^D}{\sigma}\right).\]

Integrating over $\mathbb{R}$ with respect to $y$ and dividing by $\sigma$ yields \[\sum_{i=1}^N\phi\left(\frac{x-b_i^C}{\sigma}\right) = \sum_{i=1}^N\phi\left(\frac{x-b_i^D}{\sigma}\right).\]
 An application of Lemma~\ref{lem:equal-projections} yields $\sum_{i=1}^N(b_i^C)^n = \sum_{i=1}^N(b_i^D)^n$ for all $n\in\mathbb{N}$. A similar method yields $\sum_{i=1}^N(d_i^C)^n = \sum_{i=1}^N(d_i^D)^n$. Next, an application of Lemma~\ref{lem:injectiveproduct} yields that for each $m_1,m_2\in\mathbb{N}$ we have $\sum_{i=1}^N(b_i^C)^{m_1}(d_i^C)^{m_2} = \sum_{i=1}^N(b_i^D)^{m_1}(d_i^D)^{m_2}$.

Before proceeding we recall the following basic fact:
suppose $a_1,..., a_k$ are non-negative real numbers and $c_1,...,c_k$ are positive real numbers such that $a_k>a_i$ for all $1\leq i<k$. Then there exists $m\in \mathbb{N}$ such that
\[
c_ka_k^m>\sum_{i=1}^{k-1} c_ia_i^m.
\]

Now, label the points in $C$ such that $b_1^C\leq b_2^C...\leq b_N^C$ and whenever $b_i^C=b_{i+1}^C$, $d_i^C\leq d_{i+1}^C$. Label the points in $D$ in the same way.

Suppose that some $b_k^C\neq b_k^D$. Choosing $k$ to be the largest such index, it follows from above that that for all $n\geq 1$.
 \[
\sum_{i=1}^k (b_i^C)^n=\sum_{i=1}^k (b_i^D)^n.
\]
 
 Without loss of generality, we assume that $b_k^C>b_k^D$. Since $b_k^D\geq b_{k-1}^D\geq...\geq b_1^D$, we can find an $m$ such that 
 \[
 \sum_{i=1}^k (b_i^C)^m\geq (b_k^C)^m>\sum_{i=1}^k (b_i^D)^m
 \]
 which is a contradiction. Hence we have that $b_i^C=b_i^D$ for all $1\leq i\leq N$.
 
 Now suppose that for some $k$, $d_k^C\neq d_k^D$. Again we choose $k$ to be largest such index and assume without loss of generality that $d_k^C>d_k^D$. Now we choose $m_1$ such that $(b_k^C)^{m_1}d_k^C>(b_i^D)^{m_1}d_i^D$ for all $1\leq i< k$. $m_1$ exists since for all $1\leq i\leq k$, either $b_k^C=b_k^D>b_i^D$, or $b_k^C=b_k^D=b_i^D$ and $d_k^C>d_k^D\geq d_i^D$.
 
 Given this $m_1$, we can apply the above fact again to find $m_2$ such that
 \[
 \sum_{i=1}^k((b_i^C)^{m_1}d_i^C)^{m_2}\geq ((b_k^C)^{m_1}d_k^C)^{m_2}> \sum_{i=1}^k((b_i^D)^{m_1}d_i^D)^{m_2}
 \]
 which is another contradiction. Hence, we must have that $d_i^C=d_i^D$ for all $1\leq i\leq N$ which means we have shown that $C=D$.

\end{proof}

We prove a partial result for the unweighted Gaussian Persistence Curve.

\begin{theorem}\label{t:ainj}
Let $C$ and $D$ be two persistence diagrams with maximum death values $d^C_{max},d_{max}^D$ and minimum birth values $b_{min}^C, b_{min}^D$ respectively. Suppose that either $d_{max}^C \neq d_{max}^D$ or $b_{min}^C \neq b_{min}^D$. Then $G_C\neq G_D$.
\end{theorem}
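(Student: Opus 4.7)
The plan is to prove the contrapositive via tail analysis of the CDF realization
\[G_D(t) = \sum_{(b,d)\in D}\Phi\!\left(\frac{t-b}{\sigma}\right)\Phi\!\left(\frac{d-t}{\sigma}\right).\]
The main tool is the Mills-ratio asymptotic $1-\Phi(x)\sim \phi(x)/x$ as $x\to+\infty$, which gives
\[\Phi\!\left(\frac{d-t}{\sigma}\right) \;=\; 1-\Phi\!\left(\frac{t-d}{\sigma}\right) \;\sim\; \frac{\sigma}{(t-d)\sqrt{2\pi}}\,\exp\!\left(-\frac{(t-d)^2}{2\sigma^2}\right)\qquad (t\to\infty).\]
The decisive observation is that among such Gaussian tails, the one attached to the \emph{largest} death value strictly dominates: for $d_1>d_2$, the ratio
\[\frac{\Phi\!\left(\frac{d_2-t}{\sigma}\right)}{\Phi\!\left(\frac{d_1-t}{\sigma}\right)}\]
tends to $0$ as $t\to\infty$, since the exponentials contribute a factor $\exp\!\left(\tfrac{(d_1-d_2)(2t-d_1-d_2)}{2\sigma^2}\right)$ which blows up.

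Suppose first that $d_{max}^C \neq d_{max}^D$, and without loss of generality $d_{max}^C > d_{max}^D$. Let $m_C \geq 1$ be the multiplicity of the death value $d_{max}^C$ in $C$. Since the births appearing in $C \cup D$ form a finite set, $\Phi\!\left(\frac{t-b}{\sigma}\right) \to 1$ uniformly in $b$ as $t\to\infty$; combining this with the dominance observation yields
\[G_C(t) \;=\; m_C\,\Phi\!\left(\frac{d_{max}^C-t}{\sigma}\right)\bigl(1+o(1)\bigr), \qquad G_D(t) \;=\; o\!\left(\Phi\!\left(\frac{d_{max}^C-t}{\sigma}\right)\right),\]
where the second estimate uses that every $(b,d)\in D$ has $d \leq d_{max}^D < d_{max}^C$. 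Therefore $G_C(t) - G_D(t) > 0$ for all sufficiently large $t$, so $G_C \not\equiv G_D$.

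The case $b_{min}^C \neq b_{min}^D$ is handled by the mirror argument as $t \to -\infty$. In this regime $\Phi\!\left(\frac{d-t}{\sigma}\right)\to 1$ uniformly, the factors $\Phi\!\left(\frac{t-b}{\sigma}\right)$ are the decaying Gaussian tails, and the term(s) with the \emph{smallest} birth value decay strictly slowest by the analogous Mills-ratio computation (the sign flips of $d_1 - d_2$ and of $2t - d_1 - d_2$ cancel). Assuming WLOG $b_{min}^C < b_{min}^D$, the same bookkeeping yields $G_C(t) - G_D(t) > 0$ for all sufficiently negative $t$.

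I do not expect a substantive obstacle: once the strict dominance of the extremal-tail term is established, everything reduces to Mills-ratio bookkeeping. The only point requiring mild care is that the sub-leading factors (the $\Phi$'s with the opposite orientation) in the dominant summand converge to $1$ uniformly in the remaining coordinate over the finite set of diagram points, which is automatic since $C$ and $D$ are finite multi-sets.
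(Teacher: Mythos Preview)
Your proof is correct and follows essentially the same tail-dominance argument as the paper: both show that as $t\to\infty$ (resp.\ $t\to-\infty$) the summand attached to the maximum death (resp.\ minimum birth) strictly dominates all others, the paper doing this via L'H\^opital and a case split on the births while you use the Mills-ratio asymptotic together with the uniform convergence $\Phi\bigl(\tfrac{t-b}{\sigma}\bigr)\to 1$ over the finite set of births. One cosmetic slip: the exponential factor in your ratio computation should be $\exp\bigl(\tfrac{(d_2-d_1)(2t-d_1-d_2)}{2\sigma^2}\bigr)$, which tends to $0$ (not ``blows up''), in agreement with the conclusion you correctly state.
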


\begin{proof}
Let $(b_1, d_1)$ be a point of $C$ and $(b_2, d_2)$ a point of $D$. 
We first show that if $d_1 > d_2$ then, 
\begin{equation}\label{cdf}
    \lim_{t \to \infty} \frac{\Phi(\frac{t-b_2}{\sigma})\Phi(\frac{d_2-t}{\sigma})}{\Phi(\frac{t-b_1}{\sigma})\Phi(\frac{d_1-t}{\sigma})} = 0.
\end{equation}
We will proceed by cases. If $b_1 = b_2$ then with L'hospital's rule we have, 

\begin{equation}
    \lim_{t \to \infty} \frac{\Phi(\frac{t-b_2}{\sigma})\Phi(\frac{d_2-t}{\sigma})}{\Phi(\frac{t-b_1}{\sigma})\Phi(\frac{d_1-t}{\sigma})} = \lim_{t \to \infty} \frac{\phi(\frac{d_2-t}{\sigma})}{\phi(\frac{d_1-t}{\sigma})} =
    \lim_{t \to \infty} e^{(\frac{d_2+d_1-2t}{\sigma})(\frac{d_1-d_2}{\sigma})} = 0.
\end{equation}
Now if $b_1 < b_2$ we note that $\Phi(\frac{t-b_1}{\sigma}) > \Phi(\frac{t-b_2}{\sigma})$ for any sufficiently large value of $t$. Applying this inequality to the limit we reduce back to the first case. Finally if $b_1 > b_2$ we have, 

\begin{equation}
    \lim_{t \to \infty} \frac{\Phi(\frac{t-b_2}{\sigma})\Phi(\frac{d_2-t}{\sigma})}{\Phi(\frac{t-b_1}{\sigma})\Phi(\frac{d_1-t}{\sigma})} = \lim_{t \to \infty} \frac{\Phi(\frac{t-b_2}{\sigma})}{\Phi(\frac{t-b_1}{\sigma})}\lim_{t \to \infty}\frac{\Phi(\frac{d_2-t}{\sigma})}{\Phi(\frac{d_1-t}{\sigma})} = 0.
\end{equation}

Assume that $d^C_{max}>d^D_{max}$, and hence $d^C_{max}$ is larger then the death value of any point of $D$. Let $(b^C, d^C_{max})$ be a point of $C$. Then
\begin{align*}
\lim_{t \to \infty} \frac{G_D(t)}{G_C(t)}&= \lim_{t \to \infty} \frac{\displaystyle\sum_{(b, d)\in D}\Phi(\frac{t-b}{\sigma})\Phi(\frac{d-t}{\sigma})}{\displaystyle\sum_{(b, d)\in C}\Phi(\frac{t-b}{\sigma})\Phi(\frac{d-t}{\sigma})}\leq \lim_{t \to \infty} \frac{\displaystyle\sum_{(b, d)\in D}\Phi(\frac{t-b}{\sigma})\Phi(\frac{d-t}{\sigma})}{\Phi(\frac{t-b^C}{\sigma})\Phi(\frac{d^C_{max}-t}{\sigma})}\\
&=\displaystyle\sum_{(b, d)\in D}\lim_{t \to \infty}\frac{\Phi(\frac{t-b}{\sigma})\Phi(\frac{d-t}{\sigma})}{\Phi(\frac{t-b^C}{\sigma})\Phi(\frac{d^C_{max}-t}{\sigma})}=0
\end{align*}

Hence, for any sufficiently large value of $t$, $G_C(t)>G_D(t)$. A similar argument applied when assuming distinct minimum birth values, one simply needs to look at the limit as $t$ approaches negative infinity instead.

\end{proof}

Obtaining a result for general surfaces or curves may prove to be challenging. The next example shows that the injectivity of persistence surfaces (hence Gaussian persistence curves) cannot be generalized to arbitrary weight functions.
\begin{ex}
\label{ex:counter ex gpc is invertible}
Let $\kappa_D(b,d) = \frac{d-b}{L_D}$. Take $C = \{(b,d)\}$ and $D = \{(b,d),(b,d)\}$. That is, $C$ is a diagram with a single point and $D$ is a diagram with two points both at the same place as $C$. Then $D\neq C$ but 
\begin{align*}
\rho_{C, \kappa_C}(x, y)&=\phi\left(\frac{x-b}{\sigma}\right)\phi\left(\frac{y-d}{\sigma}\right)\\&=\frac12\phi\left(\frac{x-b}{\sigma}\right)\phi\left(\frac{y-d}{\sigma}\right)+\frac12\phi\left(\frac{x-b}{\sigma}\right)\phi\left(\frac{y-d}{\sigma}\right)=\rho_{D, \kappa_D}(x, y).
\end{align*}

\end{ex}

Note that the weight function in the above example is a natural one to consider in practice as it produced a strong stability result in the original persistence curve setting and also performed well in computer experiments \cite{chunglawson2019}. However, in this example at least the failure of injectivity of the persistence surface is clearly tied to the fact that different diagrams produce different weighting functions. We conjecture that this is the only way for injectivity to fail.
\begin{conjecture}
Let $\kappa\colon \mathbb R^2\to\mathbb R^+$ be a weighting function and let $C$ and $D$ be persistence diagrams. Suppose that $\rho_{C, \kappa}=\rho_{D, \kappa}$. Then $C=D$.
\end{conjecture}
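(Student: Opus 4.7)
The plan is to pass to the Fourier transform, where the Gaussian structure factors out and the problem reduces to a statement about linear independence of complex exponentials. With the fixed isotropic covariance $\Sigma = \sigma^2 I$, the surface $\rho_{C, \kappa}$ is a finite linear combination of bivariate Gaussians, hence a Schwartz function and its Fourier transform is classical. A direct computation gives
\begin{equation*}
\widehat{g_{(b,d),\Sigma}}(\xi_1, \xi_2) = e^{-i(b\xi_1 + d\xi_2)}\, e^{-\sigma^2(\xi_1^2 + \xi_2^2)/2},
\end{equation*}
so that
\begin{equation*}
\widehat{\rho_{C, \kappa}}(\xi_1, \xi_2) = e^{-\sigma^2(\xi_1^2 + \xi_2^2)/2} \sum_{(b,d) \in C} \kappa(b,d)\, e^{-i(b\xi_1 + d\xi_2)},
\end{equation*}
and similarly for $D$.

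Assuming $\rho_{C,\kappa} \equiv \rho_{D,\kappa}$, the Fourier transforms agree, and dividing through by the never-vanishing Gaussian factor yields
\begin{equation*}
\sum_{(b,d) \in C} \kappa(b,d)\, e^{-i(b\xi_1 + d\xi_2)} = \sum_{(b,d) \in D} \kappa(b,d)\, e^{-i(b\xi_1 + d\xi_2)}
\end{equation*}
for all $(\xi_1, \xi_2) \in \mathbb R^2$. Let $\{v_1, \ldots, v_n\}$ enumerate the distinct points appearing in the underlying support $C \cup D$, and write $m_C(v_j)$, $m_D(v_j)$ for the corresponding multiplicities in each diagram. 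Rearranging yields
\begin{equation*}
\sum_{j=1}^n \bigl(m_C(v_j) - m_D(v_j)\bigr) \kappa(v_j)\, e^{-i\langle v_j, \xi\rangle} = 0 \quad \text{for all } \xi \in \mathbb R^2.
\end{equation*}

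The key step is to invoke the classical fact that distinct characters $\xi \mapsto e^{-i\langle v_j, \xi\rangle}$ of $\mathbb R^2$ are linearly independent over $\mathbb C$; equivalently, the distributional inverse Fourier transform of the left-hand side is $\sum_j c_j \delta_{v_j}$, a finite linear combination of Dirac masses at distinct points, which vanishes only if every $c_j$ is zero. Hence $\bigl(m_C(v_j) - m_D(v_j)\bigr) \kappa(v_j) = 0$ for every $j$, and because $\kappa$ is strictly positive on the diagram points by hypothesis, $m_C(v_j) = m_D(v_j)$ for all $j$, i.e.\ $C = D$ as multi-sets.

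There is no serious technical obstacle here once the Fourier viewpoint is adopted; everything follows from standard facts about Gaussians and independence of characters. The positivity of $\kappa$ on the open half-plane $\{d > b\}$ is essential, since any weight vanishing at a diagram point would allow that point to be freely added or deleted without changing the surface, as already seen in Example \ref{ex:counter ex gpc is invertible}. I would also remark that the same argument goes through verbatim for any fixed positive definite covariance $\Sigma$, not merely the scalar case used elsewhere in the paper, since the Fourier transform of the corresponding Gaussian remains a pure phase $e^{-i\langle v, \xi\rangle}$ times a never-vanishing Gaussian factor $e^{-\xi^T \Sigma \xi/2}$.
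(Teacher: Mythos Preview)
The statement is recorded in the paper as a \emph{conjecture}; no proof is given there. Your argument is correct (under the natural reading that $\kappa>0$ strictly above the diagonal, which is necessary for the conjecture to hold at all), and in fact resolves it.

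For comparison, the paper does prove the unweighted special case (Theorem~\ref{injective}) by an entirely different and considerably longer route: multiplying $\rho_C=\rho_D$ by monomials $x^{m_1}y^{m_2}$ and integrating yields equalities of mixed power sums $\sum_i (b_i^C)^{m_1}(d_i^C)^{m_2}=\sum_i (b_i^D)^{m_1}(d_i^D)^{m_2}$ (Lemmas~\ref{lem:equal-projections} and~\ref{lem:injectiveproduct}), from which the multisets are recovered by an ad hoc asymptotic argument that isolates the largest coordinate by sending an exponent to infinity. That method does not adapt cleanly to general positive weights, since the power sums then carry factors $\kappa(b_i,d_i)$ and the extraction step becomes awkward. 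Your Fourier argument sidesteps all of this: once the common nonvanishing Gaussian envelope is divided out, linear independence of the characters $\xi\mapsto e^{-i\langle v_j,\xi\rangle}$ (equivalently, injectivity of the Fourier transform on finitely supported measures) forces each coefficient $(m_C(v_j)-m_D(v_j))\kappa(v_j)$ to vanish, and positivity of $\kappa$ finishes. This is shorter, handles the weighted case for free, and, as you note, extends to arbitrary positive-definite $\Sigma$.

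One minor correction: Example~\ref{ex:counter ex gpc is invertible} does not illustrate a weight vanishing at a diagram point, but rather a diagram-\emph{dependent} weight. Your point that strict positivity off the diagonal is essential is correct; the supporting example is simply $C=\{(b_0,d_0)\}$, $D=\emptyset$ with $\kappa(b_0,d_0)=0$.
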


Since $\rho_{C, \kappa_C}=\rho_{D, \kappa_D}$ implies that $G_{C, \kappa_C}=G_{D, \kappa_D}$, example \ref{ex:counter ex gpc is invertible} also shows that it is possible for distinct diagrams to produce the same Gaussian persistence curve. However, as with persistence surfaces we conjecture that this cannot happen for weighting functions which are independent of the diagrams.

\begin{conjecture}
Let $\kappa\colon \mathbb R^2\to\mathbb R^+$ be a weighting function and let $C$ and $D$ be persistence diagrams. Suppose that $G_{C, \kappa}=G_{D, \kappa}$. Then $C=D$.
\end{conjecture}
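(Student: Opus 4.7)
The plan is to reduce the conjecture to a linear independence statement for the family $\{h_{b,d}(t):=\Phi((t-b)/\sigma)\Phi((d-t)/\sigma)\}$ over distinct ordered pairs $(b,d)$ with $b<d$, and then to establish this independence by asymptotic peeling. If the family is linearly independent, then $G_{C,\kappa}\equiv G_{D,\kappa}$ and the CDF realization yield
\[
\sum_{(b,d)\in C\cup D}\bigl(m_C(b,d)-m_D(b,d)\bigr)\kappa(b,d)\,h_{b,d}(t)\;\equiv\;0,
\]
so positivity of $\kappa$ forces $m_C(b,d)=m_D(b,d)$ for every $(b,d)$, i.e.\ $C=D$.

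To establish linear independence, suppose $\sum_i\eta_i h_{b_i,d_i}\equiv 0$ with distinct $(b_i,d_i)$. Write $f_c(t):=\Phi((c-t)/\sigma)$, so that the identity $\Phi((t-b)/\sigma)=1-f_b(t)$ gives $h_{b,d}=f_d-f_bf_d$. Grouping by the second coordinate produces
\[
\sum_{d}A_d f_d(t)\;=\;\sum_i\eta_i f_{b_i}(t)f_{d_i}(t),\qquad A_d:=\sum_{i:d_i=d}\eta_i.
\]
As $t\to\infty$, each $f_c(t)\sim e^{-(t-c)^2/2\sigma^2}/\mathrm{poly}(t-c)$, while each product $f_bf_d\sim e^{-((t-b)^2+(t-d)^2)/2\sigma^2}/\mathrm{poly}$ decays strictly faster than any $f_{d'}$. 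Picking $d^\ast=\max\{d:A_d\ne 0\}$ and dividing through by $f_{d^\ast}(t)$ before letting $t\to\infty$ forces $A_{d^\ast}=0$; iterating yields $A_d=0$ for every $d$, and the symmetric argument at $t\to-\infty$ yields $B_b:=\sum_{i:b_i=b}\eta_i=0$ for every $b$. The residual identity is $\sum_i\eta_i f_{b_i}(t)f_{d_i}(t)\equiv 0$.

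For the residual, the Mills ratio expansion gives $f_b(t)f_d(t)=\tfrac{\sigma^2}{2\pi(t-b)(t-d)}\exp\bigl(-\tfrac{(t-b)^2+(t-d)^2}{2\sigma^2}\bigr)(1+O(1/t^2))$, so the common exponential factors as $e^{-t^2/\sigma^2}e^{(b+d)t/\sigma^2}e^{-(b^2+d^2)/2\sigma^2}$ and terms are ranked asymptotically by $b+d$. Set $m^\ast:=\max\{b_i+d_i:\eta_i\ne 0\}$ and $\alpha_i:=\eta_i e^{b_id_i/\sigma^2}$; on the top level set $\{i:b_i+d_i=m^\ast\}$ the identity $b^2+d^2=(m^\ast)^2-2bd$ lets the common exponential factor out. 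Dividing by this dominant exponential, multiplying by $t^2$, and expanding both $t^2/[(t-b)(t-d)]$ (whose $(1/t)^k$ coefficient is the degree-$k$ complete homogeneous symmetric polynomial in $b$ and $d$, itself a polynomial in $s=bd$ once $b+d=m^\ast$ is fixed) and the Mills correction as power series in $h=1/t$, matching orders and performing triangular elimination against the lower-order identities produces the Vandermonde system
\[
\sum_{i:b_i+d_i=m^\ast}\alpha_i\,s_i^k\;=\;0,\qquad k\ge 0,
\]
with $s_i:=b_id_i$. Pairs with a common sum are determined by their product, so the $\{s_i\}$ are distinct and a Vandermonde argument forces all such $\alpha_i$, hence all such $\eta_i$, to vanish. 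Iterating on the next-largest value of $b+d$ eliminates all remaining indices. The main obstacle is precisely this final step: verifying that the interaction between the algebraic expansion of $t^2/[(t-b)(t-d)]$ and the transcendental Mills corrections $\sigma^2/[(t-b)^2]+\sigma^2/[(t-d)^2]+\cdots$ collapses, order by order in $h$, to the clean Vandermonde form above. A conceivable alternative is to first deduce the analogous surface conjecture (which follows immediately from the linear independence of distinct Gaussians via Fourier inversion) and then recover $\rho_{D,\kappa}$ from $G_{D,\kappa}$; however, this latter reduction appears to be of essentially the same difficulty as the present conjecture.
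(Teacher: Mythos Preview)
This statement is presented in the paper as an open \emph{Conjecture}; the authors do not prove it, and the only related result they establish is the partial injectivity theorem that distinguishes diagrams with different extremal birth or death values. There is therefore no proof in the paper to compare your proposal against---your attempt goes strictly beyond what the paper claims.

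On the merits: the reduction to linear independence of $\{h_{b,d}\}$ is correct (using $\kappa>0$ off the diagonal), and the peeling at $t\to+\infty$ to force each $A_d=0$ is valid. The $t\to-\infty$ step is not actually needed. For the residual identity $\sum_i\eta_i f_{b_i}f_{d_i}\equiv 0$, the gap you flag can be closed by a simple degree count. On the top level $b_i+d_i=m^\ast$, shift via $u=t-m^\ast/2$ and set $r_i=(d_i-b_i)/2$, $\rho_i=r_i^{\,2}$; then $(t-b_i)(t-d_i)=u^2-\rho_i$ and, by symmetry under $r\mapsto-r$, every term depends only on $\rho_i$. Write the exact Mills ratio so that $f_c(t)=\sigma\phi((t-c)/\sigma)\,R(t-c)/(t-c)$ with $R(v)\sim 1-\sigma^2/v^2+3\sigma^4/v^4-\cdots$. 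The asymptotic expansion of the relevant factor is
\[
\frac{R(u+r)\,R(u-r)}{u^2-\rho}\;\sim\;\sum_{n\ge 1}\frac{c_n(\rho)}{u^{2n}},
\]
and one checks that $c_n(\rho)=\rho^{\,n-1}+(\text{terms of degree}\le n-2)$: the factor $1/(u^2-\rho)=\sum_{k\ge0}\rho^k u^{-2k-2}$ supplies the monic top term, while $R(u+r)R(u-r)=\sum_{j\ge0}p_j(\rho)u^{-2j}$ has $p_0=1$ and, for $j\ge1$, $\deg_\rho p_j\le j-1$ (each product $a_k a_l(u+r)^{-2k}(u-r)^{-2l}$ with $k+l\ge1$ contributes $\rho$-degree exactly $j-k-l\le j-1$ to the $u^{-2j}$ coefficient). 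Since the lower levels $b+d<m^\ast$ contribute only exponentially small terms after dividing out the dominant exponential, all coefficients of the asymptotic expansion vanish, giving $\sum_{i\in I}\alpha_i\,c_n(\rho_i)=0$ for every $n$; triangular elimination then yields the clean Vandermonde relations $\sum_{i\in I}\alpha_i\,\rho_i^{\,n-1}=0$. As pairs with fixed sum and $b<d$ are determined by their product, the $\rho_i$ are distinct and all $\alpha_i=0$. Iterating on the next value of $m^\ast$ finishes the argument. With this degree bound supplied, your outline constitutes a complete proof of the conjecture.
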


\bibliographystyle{plain}

\bibliography{Main.bib}

\begin{thebibliography}{1}

\bibitem{adams2017persistence}
Henry Adams, Tegan Emerson, Michael Kirby, Rachel Neville, Chris Peterson,
  Patrick Shipman, Sofya Chepushtanova, Eric Hanson, Francis Motta, and Lori
  Ziegelmeier.
\newblock Persistence images: A stable vector representation of persistent
  homology.
\newblock {\em The Journal of Machine Learning Research}, 18(1):218--252, 2017.

\bibitem{persistentEntropyStability}
Nieves Atienza, Roc{\'{\i}}o Gonz{\'{a}}lez{-}D{\'{\i}}az, and
  M.~Soriano{-}Trigueros.
\newblock On the stability of persistent entropy and new summary functions for
  {TDA}.
\newblock {\em CoRR}, abs/1803.08304, 2018.

\bibitem{berry2018functional}
Eric Berry, Yen-Chi Chen, Jessi Cisewski-Kehe, and Brittany~Terese Fasy.
\newblock Functional summaries of persistence diagrams.
\newblock {\em Journal of Applied and Computational Topology}, 4(2):211--262,
  2020.

\bibitem{bubenik2015statistical}
Peter Bubenik.
\newblock Statistical topological data analysis using persistence landscapes.
\newblock {\em The Journal of Machine Learning Research}, 16(1):77--102, 2015.

\bibitem{chazal2013bootstrap}
Fr{\'e}d{\'e}ric Chazal, Brittany~Terese Fasy, Fabrizio Lecci, Alessandro
  Rinaldo, Aarti Singh, and Larry Wasserman.
\newblock On the bootstrap for persistence diagrams and landscapes.
\newblock {\em arXiv preprint arXiv:1311.0376}, 2013.

\bibitem{chung2020smooth}
Yu-Min Chung, Michael Hull, and Austin Lawson.
\newblock Smooth summaries of persistence diagrams and texture classification.
\newblock In {\em Proceedings of the IEEE/CVF Conference on Computer Vision and
  Pattern Recognition Workshops}, pages 840--841, 2020.

\bibitem{chunglawson2019}
Yu-Min Chung and Austin Lawson.
\newblock Persistence curves: a canonical framework for summarizing persistence
  diagrams.
\newblock {\em Adv. Comput. Math.}, 48(1):Paper No. 6, 42, 2022.

\bibitem{edelsbrunner2008persistent}
Herbert Edelsbrunner, John Harer, et~al.
\newblock Persistent homology-a survey.
\newblock {\em Contemporary mathematics}, 453:257--282, 2008.

\bibitem{patel1996handbook}
Jagdish~K Patel and Campbell~B Read.
\newblock {\em Handbook of the normal distribution}, volume 150.
\newblock CRC Press, 1996.

\end{thebibliography}

\end{document}